\numberwithin{equation}{section} 
\titleformat*{\section}{\large\bfseries}
\titleformat*{\subsection}{\bfseries}
\theoremstyle{plain}
  \newtheorem{thm}{Theorem}
  \newtheorem{prop}[thm]{Proposition}
  \newtheorem{lemma}[thm]{Lemma}
\theoremstyle{definition}
  \newtheorem{remark}[thm]{Remark}
\newcommand{\N}{\mathbb{N}} \newcommand{\Z}{\mathbb{Z}} 
\newcommand{\R}{\mathbb{R}} \newcommand{\C}{\mathbb{C}}
 \newcommand{\E}{\mathbb{E}}
\DeclareMathOperator{\tr}{Tr} 
\DeclareMathOperator{\sign}{sgn} 
\DeclareMathOperator{\erfc}{erfc} 
\DeclarePairedDelimiter{\abs}{\lvert}{\rvert} 
\newcommand{\MeijerG}[8][\Big]{G^{{ #2 },{ #3 }}_{{ #4 },{ #5 }} #1( \begin{matrix} #6 \\ #7 \end{matrix}\, #1\vert\, #8 #1)}
\newcommand{\hypergeometric}[6][\Big]{\,{}_{#2} F_{#3} #1( \begin{matrix} #4 \\ #5 \end{matrix}\, #1\vert\, #6 #1)}
\title{\Large\bfseries Real eigenvalue statistics \\
for products of asymmetric real Gaussian matrices}
\author{Peter J. Forrester and Jesper R.  Ipsen}
\date{}
\begin{document}

\maketitle

\begin{center}
\small\itshape
Department of Mathematics and Statistics,\\ ARC Centre of Excellence for Mathematical
 and Statistical Frontiers,\\ The University of Melbourne,
Victoria 3010, Australia
\end{center}

\bigskip

\begin{abstract}
\noindent 
Random matrices formed from i.i.d.~standard real
Gaussian entries have the feature that the expected number of real eigenvalues 
is non-zero. This property persists for products of such matrices, independently chosen,
and moreover it is known that as the number of matrices in the product tends to infinity,
the probability that all eigenvalues are real tends to unity. We quantify the distribution
of the number of real eigenvalues for products of finite size real Gaussian matrices by giving an
explicit Pfaffian formula for the probability that there are exactly $k$ real eigenvalues as a determinant
 with entries involving particular Meijer $G$-functions. We also compute
the explicit form of the Pfaffian correlation kernel for the correlation between real eigenvalues,
and the correlation between complex eigenvalues. The simplest example of these --- the eigenvalue
density of the real eigenvalues --- gives by integration the expected number of real eigenvalues.
Our ability to perform these calculations relies on the construction of certain skew-orthogonal polynomials
in the complex plane, the computation of which is carried out using their relationship to particular random
matrix averages.
\end{abstract}

\vspace{3em}

\section{Introduction}


A basic question in random matrix theory is to ask for the probability distribution of the number of real
eigenvalues for an ensemble of $N \times N$ random matrices with real entries. With the ensemble made
up of standard Gaussian random matrices, i.e.~in the circumstance that each element is independently chosen
as a real standard Gaussian, Edelman  \cite{Ed97} was the first person to obtain results on this problem.
The approach taken centered on knowledge of the explicit functional form of the
probability density function (PDF) for the event that there are $k$ real eigenvalues denoted $\{\lambda_l\}_{l=1}^k$, and $N-k$ complex
eigenvalues denoted $\{x_j \pm i y_j \}_{j=1}^{(N-k)/2}$ with $(x_j,y_j) \in \R\times\R_+$ (the fact that the complex
eigenvalues occur in complex conjugate pairs implies $k$ must have the same parity as $N$).
Thus it was shown that this is equal to
\begin{equation}\label{3.1}
\frac{1}{ k! ((N-k)/2)! } \frac1{Z_{N}}
\abs*{ \Delta\Big(\{\lambda_l\}_{l=1}^k \cup \{ x_j \pm i y_j \}_{j=1}^{(N-k)/2}\Big) } 
\prod_{j=1}^ke^{-  \lambda_j^2/2} 
\prod_{j=1}^{(N-k)/2} 2e^{y_j^2 - x_j^2}\erfc(\sqrt{2} y_j),
\end{equation}
where $\Delta(\{z_p\}_{p=1}^m) := \prod_{j < l}^m (z_l - z_j)$ denotes the Vandermonde determinant and
\begin{equation}\label{normalisation}
Z_{N}=2^{N(N+1)/4} \prod_{l=1}^N \Gamma(l/2).
\end{equation}
(see also \cite{LS91}).
Integrating (\ref{3.1}) over $\{\lambda_l\} \cup \{x_j + i y_j \}$ gives the probability $p_{N,k}$ that there are
exactly $k$ real eigenvalues.
The simplest case to compute is when $k=N$ and thus all eigenvalues are real,
for which the probability was found to equal  $2^{-N(N-1)/4}$.

Questions relating to the probability that all eigenvalues are real for random matrices with real entries occur in applications. 
Consider first the tensor structure $\mathcal A = (a_{ijk}) \in
\mathbb R^{p \times p \times 2}$, represented as the column vector
${\rm vec } \, \mathcal A \in \mathbb R^{4 p^2}$. As reviewed in \cite{KB09}, it is of interest to find matrices
$U = [ \vec{u}_1 \cdots \vec{u}_R] \in \mathbb R^{p \times R}$,
$V = [ \vec{v}_1 \cdots \vec{v}_R] \in \mathbb R^{p \times R}$,
$W = [ \vec{w}_1 \cdots \vec{w}_R] \in \mathbb R^{2 \times R}$ such that
\[
{\rm vec} \, \mathcal A = \sum_{r=1}^R \vec{w}_r \otimes \vec{v}_r \otimes \vec{u}_r
\]
for $R$ --- referred to as the rank --- as small as possible. 
It turns out that with both $(a_{ij1}) =: X_1 \in \mathbb R^{p \times p}$ and
$(a_{ij2}) =: X_2 \in \mathbb R^{p \times p}$ random matrices, entries chosen from a continuous
distribution, one has that $R = p$ if all the eigenvalues of $X_1^{-1} X_2$ are real,
and $R = p + 1$ otherwise \cite{tB91}. In the Gaussian case these probabilities have been computed in
\cite{FM11} and \cite{BF12}
as equal to $(\Gamma((p+1)/2))^p/G(p+1)$, 
where $G(x)$ denotes the Barnes $G$-function,
and the corresponding large $R$ asymptotic form has been computed in \cite{BF12}.

A second example comes from quantum entanglement. Lakshminarayan \cite{La13} considered the problem of
quantifying when two-qubits $|\phi_1\rangle$ and $|\phi_2\rangle$ are an optimal pair, in the case that the states
are chosen from a uniform distribution on the unit 3--sphere. The condition of being an optimal pair is known
\cite{SRL11} as  particular inequalities for certain weighted inner products between the qubits. It was shown
in \cite{La13} that these can be interpreted as the condition for the probability that the random matrix product
$X_1 X_2$, with each $X_i$ a $2 \times 2$ real Gaussian matrix, having all eigenvalues real, which
was furthermore shown to be equal to $\pi/4$.

An intriguing effect was observed in the study \cite{La13}, which seems to have escaped early notice. Thus,
noting from the result of Edelman cited above that for a single real  Gaussian $2 \times 2$ random matrix
the probability of all eigenvalues being real is equal to $2^{-1/2}$, while for a product of two independent Gaussian
$2 \times 2$ random matrices it is $\pi/4$, the fact that $2^{-1/2} < \pi/4$ led Lakshminarayan
to investigate if the probability
of all eigenvalues being real was an increasing function of the number of matrices in the product. Numerical simulation
indicated that this is indeed the case, and further the probability that all eigenvalues are real tends to unity
as the number of random matrices in the product tends to infinity. Evidence that this is also true for products of $d \times d$
real Gaussian matrices was given in \cite{La13}, while the follow up work \cite{HJL15} provided similar evidence for random matrices with independent non-Gaussian entries. A proof in the instance of the latter circumstance that the entries are all independent
and identically distributed with a PDF containing an atom has recently been given in \cite{Re16}.

The appearance of the work  \cite{La13} coincided with the appearance of works containing
other surprising advances relating to the
eigenvalues of products of random matrices. Consider the random matrix product 
\begin{equation}\label{YX}
P_m=X_1\cdots X_m
\end{equation}
 where
each $X_i$ is an $N \times N$ standard Gaussian matrix. In the case of complex entries, Akemann and Burda \cite{AB12} showed that
the eigenvalues form a determinantal point process in the complex plane. This means that the $k$-point correlation
function for the eigenvalues $\rho_{(k)}(z_1,\dots, z_k)$ is fully determined by a single function $K(w,z)$,
referred to as the correlation kernel, according to
\begin{equation}
\rho_{(k)}(z_1,\dots, z_k) = \det [ K(z_j, z_l) ]_{j,l=1,\dots, k}.
\end{equation}
In the case of real entries, Forrester \cite{Fo13}  found a closed form expression for the probability that all
eigenvalues are real.

To specify this latter result requires introducing
the Meijer $G$-function
\begin{equation}\label{GG}
\MeijerG{m}{n}{p}{q}{a_1,\ldots,a_p}{b_1,\ldots,b_q}{z}
=
\frac{1 }{ 2 \pi i}
\int_\gamma \frac{\prod_{j=1}^m \Gamma ( b_j - s) \prod_{j=1}^n \Gamma(1 - a_j + s)}
{\prod_{j=m+1}^q \Gamma (1-  b_j + s) \prod_{j=n+1}^p \Gamma(a_j - s) } z^s \, ds,
\end{equation}
where  $\gamma$ is an appropriate contour relating to the validity of the inverse Mellin transform formula.
With $p_{N,k}^{P_m}$ denoting the probability that the random matrix product $P_m$  (\ref{YX}) has exactly
$k$ real eigenvalues it was shown in  \cite{Fo13} that, for each $X_i$ a real Ginibre matrix, we have
\begin{equation}\label{11}
p_{N,N}^{P_m} = \Big (\prod_{j=1}^N\frac{1}{\Gamma(j/2)}\Big)^m\times
\begin{cases}
\det\Big[ [a_{j,k}]_{k=1,\ldots,N/2}^{j=1,\ldots,N/2}\Big], & N\ \text{even}\\[2mm]
\det\Big[ [a_{j,k}]_{k=1,\ldots,(N-1)/2}^{j=1,\ldots,(N+1)/2}\quad 
[\tilde a_j]_{j=1,\ldots,(N+1)/2}\Big], & N\ \text{odd}
\end{cases}
\end{equation}
with
\begin{equation}\label{aa}
a_{j,k} 
= \MeijerG[\bigg]{m+1}{m}{m+1}{m+1}{\tfrac32-j,\dots, \tfrac32-j,1}{0, k,\dots, k}{1}
\qquad\text{and}\qquad
\tilde a_j=\Gamma(j - {1/2})^m.
\end{equation}
%
A simple identity for the Meijer $G$-function --- evident from the definition (\ref{GG}) --- shows that $a_{j,k}$ is
equal to the Meijer $G$-function occurring in~\cite{Fo14}. 
Moreover, these explicit formulas were used to prove that $p_{N,N}^{P_m}  \to 1$ as $m \to \infty$. Extension to rectangular matrices where given in~\cite{IK14,Ip15,Ip15thesis}, while special arithmetic properties were shown to be present in the case $m = 2$ \cite{Ku15}.

A primary aim of the  present paper is to  extend this result to the
calculation of $p_{N,k}^{P_m}$, for general $0 \le k \le N$ with the same parity as $N$. 
The following theorem will be proved in section~\ref{sec:k-real}.

\begin{thm}\label{T1}
Consider the random matrix product (\ref{YX}), in which each $X_i$ is a real Ginibre matrix. Let
\begin{equation}
b_{j,l}(\zeta):=(\zeta - 1) \Big ( a_{j,l} - 2^{-2} (2 (l-1))^m a_{j,l-1} \Big ) +
2^{-(2j-1/2)m} h_{j-1}\delta_{j,l} 
\end{equation}
with $h_{j}=(2 \sqrt{2 \pi} \Gamma(2j+1))^m$, $a_{j,l}\ (l>0)$ given by~\eqref{aa} and $a_{j,0}=0$. For $N$ even, the probability $p_{N,2k}^{P_m}$ that exactly $2k$ eigenvalues are real is given by
\begin{equation}\label{a1}
p_{N,2k}^{P_m} = \Big ( \prod_{j=1}^N \frac{1 }{ \Gamma (j/2) } \Big )^m [\zeta^k]
\det \Big [ b_{j,l}(\zeta)  \Big ]_{j,l=1,\dots,N/2},
\end{equation}
while for $N$ odd we have
\begin{equation}\label{14a}
p_{N,2k+1}^{P_m} = \Big ( \prod_{j=1}^N \frac{1 }{ \Gamma (j/2) } \Big )^m   [\zeta^k]
\det \Big [ [b_{j,l}(\zeta) ]^{j=1,\dots,(N+1)/2}_{k=1,\dots,(N-1)/2}  
\quad [\tilde a_j ]_{j=1,\dots,(N+1)/2} \Big]
\end{equation}
with $\tilde a_j$ from~\eqref{aa}. In both~\eqref{a1} and~\eqref{14a} $\zeta$ is a generating function parameter for the probabilities and $[\zeta^k] f(\zeta) $ denotes the coefficient of $\zeta^k$ in the power series expansion of $f(\zeta)$, i.e. for $N$ even
\[
1=\sum_{k=0}^{N/2}p_{N,2k}^{P_m}= \Big(\prod_{j=1}^N\frac{1}{\Gamma(j/2)}\Big)^m\det\Big[b_{j,l}(1)\Big]_{j,l=1,\dots,N/2}
\]
and similarly for the $N$ odd case in terms of~\eqref{14a}.

\end{thm}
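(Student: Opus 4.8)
The plan is to begin from the joint eigenvalue PDF of the product $P_m$ restricted to the sector with exactly $2k$ real eigenvalues, which generalises Edelman's formula (\ref{3.1}): it is proportional to $\abs{\Delta}$ of all the eigenvalues times a product of a real weight $w_r$ over the real eigenvalues and a complex-pair weight $w_c$ over the complex conjugate pairs lying in the upper half plane. For a product of $m$ real Ginibre matrices these weights are Meijer $G$-functions --- the same representation that underlies (\ref{11}) --- so that after integration everything will be expressible through the $a_{j,l}$ of (\ref{aa}). Attaching a fugacity $\zeta$ to each real--real contribution and summing over all sectors packages the probabilities into a single generating function whose coefficient of $\zeta^k$ is $p_{N,2k}^{P_m}$; since each real--real pairing in the Pfaffian structure carries two real eigenvalues, the power $\zeta^k$ indeed corresponds to $2k$ real eigenvalues.

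First I would apply de Bruijn's integration identity to convert the integral of $\abs{\Delta}\prod w$, summed over sectors, into the Pfaffian of an $N\times N$ antisymmetric matrix whose entries are skew inner products in a polynomial basis. This skew inner product splits as $\inner{f,g}_\zeta=\inner{f,g}_c+\zeta\,\inner{f,g}_r$, where $\inner{\cdot,\cdot}_r$ integrates a $\sign$-kernel against $w_r$ over pairs of real eigenvalues and $\inner{\cdot,\cdot}_c$ is the corresponding object over complex conjugate pairs. The decisive structural move is the rewriting
\[
\inner{f,g}_\zeta=\inner{f,g}_{\mathrm{tot}}+(\zeta-1)\inner{f,g}_r,\qquad \inner{\cdot,\cdot}_{\mathrm{tot}}:=\inner{\cdot,\cdot}_c+\inner{\cdot,\cdot}_r,
\]
where $\inner{\cdot,\cdot}_{\mathrm{tot}}$ is the full (unrestricted) skew inner product of the ensemble. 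This is the source of the $(\zeta-1)$ prefactor and of the diagonal $\delta_{j,l}$ term in $b_{j,l}(\zeta)$: taking the polynomial basis to be the skew-orthogonal polynomials of the product ensemble with respect to $\inner{\cdot,\cdot}_{\mathrm{tot}}$ brings the total part to canonical block form, the block normalisations supplying the diagonal $2^{-(2j-1/2)m}h_{j-1}$.

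The remaining analytic input is the explicit evaluation of the real part $\inner{\cdot,\cdot}_r$ in this basis, which I expect to equal $a_{j,l}-2^{-2}(2(l-1))^m a_{j,l-1}$. Here I would construct the skew-orthogonal polynomials in the complex plane through their random-matrix-average representation (as flagged in the abstract), which reduces $\inner{\cdot,\cdot}_r$ to integrals of the Meijer $G$ weight against monomials; evaluating these through the Mellin-transform definition (\ref{GG}) identifies them with the $a_{j,l}$, while the two-term structure of the skew-orthogonal polynomials (equivalently a contiguous relation for the relevant $G$-function) produces the shifted term with coefficient $2^{-2}(2(l-1))^m$. That this combination is consistent with (\ref{11}) is a useful guide: the correction amounts to the column operation ``column $l$ minus a multiple of column $l-1$'', which leaves the determinant unchanged, so that the leading coefficient $[\zeta^{N/2}]\det[b_{j,l}(\zeta)]=\det[a_{j,l}]$ reproduces the all-real probability exactly. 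Finally I would carry out the standard reduction of the $N\times N$ Pfaffian to the $N/2$ determinant using the parity (even/odd degree) structure of the basis, and in the $N$ odd case isolate the single forced real eigenvalue as the extra column $\tilde a_j$, which carries no factor of $\zeta$; extracting $[\zeta^k]$ then gives (\ref{a1}) and (\ref{14a}), with $\zeta=1$ collapsing the determinant to the diagonal product and hence to $\sum_k p_{N,2k}^{P_m}=1$.

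The step I expect to be the genuine obstacle is the explicit construction and pairing of the skew-orthogonal polynomials: establishing their closed form via the random-matrix averages, verifying that $\inner{\cdot,\cdot}_r$ collapses to the precise Meijer $G$ combination in $b_{j,l}(\zeta)$, and controlling the attendant bookkeeping of the powers of $2$ and of the Gamma-function normalisations $h_j$.
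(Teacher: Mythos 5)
Your proposal is correct and follows essentially the same route as the paper: the joint PDF with weights $w_r,w_c$ (Theorem~\ref{T3}), the $\zeta$-fugacity Pfaffian of Proposition~\ref{U1}, the key rewriting $\zeta\alpha+\beta=(\zeta-1)\alpha+(\alpha+\beta)$ with skew-orthogonal polynomials for the $\zeta=1$ skew product constructed via the matrix-average formulas, the Meijer $G$ evaluation of the real--real entries $\alpha_{2j-1,2l}$, and the parity-based reduction of the Pfaffian to a half-size determinant, with the extra $\tilde a_j$ column for $N$ odd. The only minor deviations are cosmetic: $w_c$ is not actually a Meijer $G$-function for general $m$ (harmless, since your structural move eliminates it exactly as the paper does), and the normalisations $h_j$ are pinned down in the paper indirectly --- by demanding the probabilities sum to unity at $\zeta=1$ --- rather than by direct integration, which is precisely the bookkeeping you flagged.
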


A formula
closely related to Theorem \ref{T1} in the case $m=1$ was derived by Akemann and Kanzieper \cite{AK05}, and
this working was soon after refined \cite{AK07} to obtain a formula equivalent to (\ref{a1}). Also for this case
Forrester and Nagao \cite{FN08p} gave a result more general than (\ref{a1}), applying to a
real random matrix formed from a general linear combination of Gaussian symmetric and anti-symmetric
matrices.

We now turn our attention to the other primary aim of our work. This relates to the statistical state formed by
the eigenvalues of the product (\ref{YX}). In the complex case, it has been remarked that the statistical state
is a determinantal point process. In the real case, it is known
from the work of Ipsen and Kieburg
 \cite{IK14} that the eigenvalue correlations form instead a Pfaffian point process. Thus, considering for
definiteness the real eigenvalues, one now has
\begin{equation}\label{K0}
\rho_{(k)}^{{\rm real} }(x_1,\dots, x_k) = {\rm Pf} \, [ \mathbf K^{\rm rr}(x_j, x_l) ]_{j,l=1,\dots, k}, \qquad
\mathbf K^{\rm rr} (x,y) = \begin{bmatrix} D(x,y) & S(x,y) \\
- S (y,x) & \tilde{I}(x,y) \end{bmatrix},
\end{equation}
where $D(x,y)$ and $\tilde{I}(x,y)$ are antisymmetric functions of $x$ and $y$.

A concern of the present paper is to compute the explicit form of the correlation kernel in (\ref{K0}) in the
case of the real eigenvalues of (\ref{YX}) for real standard Gaussian matrices, and also for the case of the complex eigenvalues. In this paper, we will see that these correlation kernels possess many similarities with other results for product of random matrices.
%
For example, the kernel for the Pfaffian point process specifying the scaled statistical state about the origin of the real
eigenvalues of  products of real Ginibre matrices is given in terms of Meijer $G$-functions. 
In the simplest case of the one point function $\rho_{(1)}^r(x) $ the resulting functional form is very succinct.
\begin{thm}\label{T2}
 Define
\begin{equation}
w_r(\lambda) =\MeijerG{m}{0}{0}{m}{\underline{\hspace{0.5cm}}}{0, \dots, 0}{\frac{\lambda^2}{2^m}}=
\prod_{j=1}^m \bigg[\int_{-\infty}^\infty d \lambda^{(j)}e^{-(\lambda^{(j)})^2/2}\bigg] \,
 \delta( \lambda - \lambda^{(1)} \cdots \lambda^{(m)}).
\end{equation}
We have
\begin{equation}\label{r-local}
\lim_{N\to\infty}\rho^r_{(1)}(x)= 
\int_{-\infty}^\infty dv\,\abs{x-v} w_r(x)w_r(v)
\MeijerG{1}{0}{0}{m}{-}{0,\ldots,0}{-xv}.
\end{equation}
\end{thm}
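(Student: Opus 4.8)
The plan is to read off $\rho^r_{(1)}$ from the Pfaffian structure~\eqref{K0}, reduce it to the diagonal of the scalar kernel, and then evaluate the origin limit by resumming the skew-orthogonal polynomial expansion. Since $D$ and $\tilde I$ are antisymmetric they vanish on the diagonal, so
\[
\rho^r_{(1)}(x)=\pf\,\mathbf K^{\rm rr}(x,x)=S(x,x),
\]
and the whole problem collapses to the large-$N$ behaviour of the single function $S_N(x,x)$. I would take as input the explicit finite-$N$ kernel and the skew-orthogonal polynomials $\{q_j\}$ (with skew-norms $r_k$) constructed in the preceding sections. Specialising that representation to the diagonal, and writing out the companion functions $\Psi_j(y)=\int_{-\infty}^\infty\sign(y-v)\,q_j(v)\,w_r(v)\,dv$ that accompany the $q_j$ in the kernel, gives in the normalisation fixed earlier
\[
\rho^r_{(1)}(x)=w_r(x)\int_{-\infty}^\infty \sign(x-v)\,A_N(x,v)\,w_r(v)\,dv,\qquad A_N(x,v):=\sum_{k=0}^{N/2-1}\frac{1}{r_k}\Big(q_{2k+1}(x)q_{2k}(v)-q_{2k}(x)q_{2k+1}(v)\Big),
\]
where $A_N$ is manifestly antisymmetric in $x\leftrightarrow v$, so that $\sign(x-v)A_N(x,v)$ is symmetric and the right-hand side is even in $x$, as it must be.

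Next I would insert the explicit polynomials. The content of the earlier construction (cf.\ the combination $a_{j,l}-2^{-2}(2(l-1))^m a_{j,l-1}$ in Theorem~\ref{T1}) is that the skew-orthogonal polynomials are the near-monomials $q_{2k}(x)=x^{2k}$ and $q_{2k+1}(x)=x^{2k+1}-c_kx^{2k-1}$ with $c_k=(2k)^m$. Substituting these, each summand factorises and the $x,v$ dependence separates,
\[
A_N(x,v)=(x-v)\sum_{k=0}^{N/2-1}\frac{1}{r_k}\Big[(xv)^{2k}+c_k(xv)^{2k-1}\Big].
\]
The combined effect of the skew-norms $r_k$, the prefactor $\big(\prod_j\Gamma(j/2)^{-1}\big)^m$, and the $2$- and $\sqrt{2\pi}$-factors recorded in $h_k=(2\sqrt{2\pi}\,\Gamma(2k+1))^m$ is that the coefficient of $(xv)^{2k}$ equals $1/\big((2k)!\big)^m$ and that of $(xv)^{2k-1}$ equals $1/\big((2k-1)!\big)^m$; since $c_0=0$ no negative power occurs, and the even and odd contributions then merge into the single power series $\sum_{j\ge0}(xv)^j/(j!)^m$ over all non-negative integers.

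Because the weight $w_r$ and the polynomials are $N$-independent, the only $N$-dependence sits in the summation range of $A_N$, so the limit amounts to extending the finite sum to an infinite one inside the integral. From the Mellin--Barnes definition~\eqref{GG}, collecting the residues of $\Gamma(-s)$ at $s=0,1,2,\dots$ gives $\MeijerG{1}{0}{0}{m}{-}{0,\ldots,0}{-xv}=\sum_{j\ge0}(xv)^j/(j!)^m$, so the limiting sum is exactly this Meijer $G$-function and $A_N(x,v)\to(x-v)\,\MeijerG{1}{0}{0}{m}{-}{0,\ldots,0}{-xv}$. Using $\sign(x-v)(x-v)=\abs{x-v}$ then turns the two antisymmetric factors into $\abs{x-v}$ and yields
\[
\lim_{N\to\infty}\rho^r_{(1)}(x)=\int_{-\infty}^\infty dv\,\abs{x-v}\,w_r(x)w_r(v)\,\MeijerG{1}{0}{0}{m}{-}{0,\ldots,0}{-xv},
\]
which is~\eqref{r-local}.

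The main obstacle is analytic rather than algebraic: I must justify exchanging the $N\to\infty$ limit with the $v$-integration. This needs a dominated-convergence bound balancing the sub-exponential growth $\MeijerG{1}{0}{0}{m}{-}{0,\ldots,0}{-xv}\sim\exp\!\big(m(xv)^{1/m}\big)$ against the super-exponential decay $w_r(v)\sim\exp\!\big(-\tfrac{m}{2}\,v^{2/m}\big)$ of the product weight, which for fixed $x$ makes the integrand integrable since $2/m>1/m$. A secondary, purely bookkeeping task is to verify the collapse of constants asserted above, tracking the powers of $2$ and the $\sqrt{2\pi}$-factors in $h_k$ through the normalisation $\big(\prod_j\Gamma(j/2)^{-1}\big)^m$ so that the coefficients genuinely reduce to $1/(j!)^m$ and the factor $2^{-m}$ is absorbed into the argument of $w_r$. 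Finally, for odd $N$ the extra row and column in~\eqref{K0} produce a boundary term which I expect to be $o(1)$ uniformly on compact sets in $x$, so that the even- and odd-$N$ limits agree.
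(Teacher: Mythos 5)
Your proposal is correct and follows essentially the same route as the paper's own proof: write $\rho^r_{(1)}(x)=S(x,x)$, insert the skew-orthogonal polynomials \eqref{1m} into the kernel \eqref{AB1} so that the sum telescopes to $(x-v)\sum_{j=0}^{N-2}(xv)^j/(j!)^m$ (this is precisely \eqref{sxy}), then send $N\to\infty$ and identify the full series with the Meijer $G$-function via \eqref{sum-hyper-meijer}, using $\sign(x-v)(x-v)=\abs{x-v}$ to arrive at \eqref{r-local}. The one caveat is your constant bookkeeping: the kernel \eqref{AB1} carries only the skew-norms $1/h_j$ (the partition-function prefactor $\big(\prod_j\Gamma(j/2)^{-1}\big)^m$ you invoke does not enter the correlation kernel), so the coefficients strictly come out as $(2\sqrt{2\pi})^{-m}/(j!)^m$; this overall factor $(2\sqrt{2\pi})^{-m}$ is likewise dropped in the paper's own passage from \eqref{sxy1} to \eqref{sxy}, so your final formula matches Theorem~\ref{T2} exactly as printed.
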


For singular values
of products of complex Ginibre matrices, it is similarly the case that the kernel for the scaled determinantal
point process in the neighbourhood of the origin can be expressed in terms of Meijer $G$-functions \cite{KZ14}; see also the recent review
\cite{AI15}.
Moreover, for fixed $N$, knowledge of the real-to-real eigenvalue correlations gives information about the moments of the distribution function for the
probability that there are $k$ real eigenvalues. In particular, integration of the spectral density (one-point function) gives the
expected number of real eigenvalues.

The rest of this paper is organised as follows. In section~\ref{sec:jpdf} we find the joint eigenvalue PDF for a Gaussian product matrix with a given number of real eigenvalues. In section~\ref{sec:skew} we introduce the generalised partition function and find the skew-orthogonal polynomials; we combine these results with the joint eigenvalues PDF to prove Theorem~\ref{T1}. Section~\ref{sec:corr} focuses on the real-to-real and the complex-to-complex eigenvalue correlations. In particularly, we study local and global scaling limits for the spectral densities and use the real spectral density to compute the expected number of real eigenvalues. The final section briefly sketches how all these results may be extended to products of rectangular matrices.

\section{Joint probability density function}
\label{sec:jpdf}

Our first task is to find the explicit functional form for the eigenvalue PDF of the random matrix product
(\ref{YX}) in the case that each $X_i$ is an independent $N \times N$ standard real Gaussian matrix. 
With this specification the joint probability measure for $\{P_m,X_1,\dots,X_m\}$ is equal to
\begin{equation}\label{Z1}
\delta(P_m - X_1 \cdots X_m)  
\bigg(\prod_{l=1}^m \Big ( \frac{1}{2 \pi} \Big )^{N^2/2} e^{-\frac{1}{2} {\rm Tr} \, X_l X_l^T} \, (d X_l)\bigg)
(d P_m).
\end{equation}
Actually this task, extended to the general bi-orthogonal invariant ensembles, has already been addressed by Ipsen and Kieburg \cite{IK14}. However the workings therein
are not sufficient for all our purposes. In particular proportionality constants are ignored, meaning that it is not
possible to proceed to derive the formulas of Theorem \ref{T1} for the probabilities 
$p_{N,k}^{P_m}$. These normalisation constants were included in the thesis~\cite{Ip15thesis} but the PDF were given in terms of $2\times 2$ matrices, which is impractical for our purpose. Furthermore, the case that the working of \cite{IK14,Ip15thesis} --- which is a generalisation of the strategies used in \cite{So07} and \cite{APS10} in the cases $m=1$ and $m=2$ respectively --- treats the real and complex eigenvalues on an equal footing, whereas we prefer to proceed in the way used in \cite{Ed97} for $m=1$ which distinguishes the real and complex eigenvalues from the outset. Below we give a more practical formulation of the joint eigenvalue PDF. 

\begin{thm}\label{T3}
Let
 \begin{equation}\label{C1a}
w_r(\lambda) =\MeijerG{m}{0}{0}{m}{\underline{\hspace{0.5cm}}}{0, \dots, 0}{\frac{\lambda^2}{2^m}}=
\prod_{j=1}^m \bigg[\int_{-\infty}^\infty d \lambda^{(j)}e^{-(\lambda^{(j)})^2/2}\bigg] \,
\delta( \lambda - \lambda^{(1)} \cdots \lambda^{(m)}),
\end{equation}
referred to as the real (or one-point) weight function and let
 \begin{equation}\label{L3a}
 w_c(x,y) = 2\pi\, \int_{-\infty}^\infty d\delta\, \frac{ |\delta| }{ \sqrt{\delta^2 + 4 y^2}}\,
  W\Big (\begin{bmatrix} \mu_+ & 0 \\ 0 & \mu_- \end{bmatrix} \Big ),\qquad
  \mu_{\pm} = \frac12 \Big ( \pm | \delta | + [ \delta^2 + 4(x^2 + y^2) ]^{1/2} \Big )
\end{equation}
with 
\begin{equation}\label{L3}
 W(G)  =
 \prod_{l=1}^m\bigg[\int_{\R^{2\times 2}}(dG^{(l)})\frac{e^{-\frac12 \tr G^{(l)}G^{(l)T}}}{\sqrt{2\pi^3}}\bigg]
 \delta(G - G^{(1)} \cdots G^{(m)}),
\end{equation}
referred to as the complex (or two-point) weight function.

Consider the product~\eqref{YX}. Given that there are $k$ real eigenvalues ($k$ of the same parity as the matrix dimension $N$), the joint 
eigenvalue PDF is
\begin{equation}\label{3.1a}
\frac{1}{ k! ((N-k)/2)! } \Big ( \frac1{Z_{N}} \Big )^m
\abs[\Big]{ \Delta\Big(\{\lambda_l\}_{l=1}^{k} \cup \{ x_j \pm i y_j \}_{j=1}^{(N-k)/2}\Big) }
\prod_{j=1}^k w_r(\lambda_j)\prod_{j=k+1}^{(N+k)/2} w_c(x_j,y_j)
\end{equation}
with $Z_{N}$ given by~\eqref{normalisation} and $w_r,w_c$ as above.
\end{thm}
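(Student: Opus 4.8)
The plan is to compute the eigenvalue PDF by a change of variables starting from the matrix measure \eqref{Z1}, following the Schur-decomposition strategy used by Edelman for the single-matrix case $m=1$ but extended to the product. The key structural idea is to introduce the product $P_m$ as a single matrix via the delta function, real-Schur-decompose $P_m = Q (\Lambda + T) Q^T$ with $Q$ orthogonal, $\Lambda$ block-diagonal carrying the $k$ real eigenvalues $\lambda_l$ and the $(N-k)/2$ complex-conjugate $2\times 2$ blocks parametrised by $(x_j,y_j)$, and $T$ strictly block-upper-triangular. The Jacobian of the Schur decomposition is the standard one and produces precisely the Vandermonde factor $\abs{\Delta(\{\lambda_l\}\cup\{x_j\pm iy_j\})}$ together with the combinatorial prefactor $1/(k!\,((N-k)/2)!)$ accounting for the orderings of eigenvalues; this part is independent of $m$ and can be imported verbatim from Edelman. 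The nontrivial content is to show that after integrating out $Q$, the strictly-triangular part $T$, and the auxiliary matrices $X_1,\dots,X_m$, the remaining weight factorises into the advertised one-point and two-point weights $w_r$ and $w_c$.

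First I would insert $\mathbf 1 = \int (dP_m)\,\delta(P_m - X_1\cdots X_m)$ and reorganise \eqref{Z1} so that the Gaussian weights on the $X_l$ are retained while $P_m$ becomes the diagonalised variable. The decoupling of the real eigenvalue contributions is the cleanest step: for a real eigenvalue $\lambda_l$, the defining relation $P_m \vec v = \lambda_l \vec v$ together with the product structure $P_m = X_1\cdots X_m$ forces $\lambda_l = \lambda^{(1)}\cdots\lambda^{(m)}$ as a product of the ``partial'' scalar images along the chain, and Gaussian-integrating the orthogonal directions at each factor yields exactly the convolution representation \eqref{C1a} of $w_r$, i.e. the Meijer $G$-function $\MeijerG{m}{0}{0}{m}{\underline{\hspace{0.5cm}}}{0,\dots,0}{\lambda^2/2^m}$. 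The complex eigenvalues are handled in parallel: each conjugate pair corresponds to a $2\times 2$ block $G = G^{(1)}\cdots G^{(m)}$ of real matrices, so integrating out the ancillary directions produces the $2\times 2$ matrix weight $W(G)$ in \eqref{L3}, and passing to the eigenvalue coordinates $(x_j,y_j)$ of the block --- with the substitution $\delta$ for the off-diagonal asymmetry and the singular-value-type parametrisation $\mu_\pm$ --- gives the integral representation \eqref{L3a} for $w_c$.

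The hard part will be justifying the \emph{factorisation and the precise normalisation constants}. Two issues intertwine. First, one must verify that the integration over the strictly upper-triangular entries $T$ (and over the redundant Gaussian directions at each stage of the product) truly decouples into a product over eigenvalue-blocks with no residual cross-terms; for a single Gaussian matrix this is the classical $\erfc$ mechanism, and for the product one must show that the chain of convolutions preserves this block-diagonal structure. Second, and this is where \cite{IK14,Ip15thesis} stop short, I must track every proportionality constant: the $(1/2\pi)^{N^2/2}$ factors per matrix, the Jacobians of the successive changes of variable, and the measure normalisations $\sqrt{2\pi^3}$ appearing in \eqref{L3}, so that they assemble into exactly $\big(1/Z_N\big)^m$ with $Z_N = 2^{N(N+1)/4}\prod_{l=1}^N\Gamma(l/2)$. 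The cleanest route is an inductive argument on the number $m$ of factors, using the convolution identity for Meijer $G$-functions to collapse the $m$-fold integral and confirming the constants by checking the $m=1$ base case against \eqref{3.1} and the overall normalisation $\sum_k p_{N,k}^{P_m}=1$ as a consistency constraint.
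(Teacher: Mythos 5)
Your overall architecture (a Schur-type triangularisation plus an Edelman-style parametrisation of the $2\times 2$ blocks via $\delta$, $\theta$ and the singular values $\mu_\pm$) points in the right direction, and your outline for converting the matrix weight $W(G)$ into $w_c(x,y)$ matches what is actually needed. But there is a genuine gap at the very first step, and it is the step that carries the whole proof. You propose to real-Schur-decompose \emph{only the product}, $P_m = Q(\Lambda+T)Q^T$, import Edelman's Jacobian ``verbatim'', and then hope that the Gaussian weights on $X_1,\dots,X_m$ decouple from the strictly triangular part $T$. For $m\ge 2$ this decoupling fails as stated: once $X_1,\dots,X_m$ are integrated against the delta function, the induced density $f(P_m)$ of the product is orthogonally invariant but no longer Gaussian, so $f(\Lambda+T)$ does not split as (a function of $\Lambda$) times (a Gaussian in $T$). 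The trace-splitting $\tr(\Lambda+T)(\Lambda+T)^T=\tr \Lambda\Lambda^T+\tr TT^T$ that makes the $m=1$ case trivial is special to the Gaussian weight; the ``classical $\erfc$ mechanism'' you invoke concerns the off-diagonal parameters of the $2\times2$ blocks, not the $T$-integration, so it cannot rescue this step. Your heuristic that $P_m\vec{v}=\lambda\vec{v}$ ``forces'' $\lambda=\lambda^{(1)}\cdots\lambda^{(m)}$ gestures at the right picture, but an eigenvector of $P_m$ is not an eigenvector of any individual factor: turning the chain of partial images into an honest change of variables requires introducing triangular data for \emph{every} factor and computing the Jacobian of that joint map, which is not Edelman's.

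The missing idea is the generalised (simultaneous) real Schur decomposition on which the paper's proof rests: $X_l = Q_l(D_l+T_l)Q_{l+1}^{-1}$ for $l=1,\dots,m$ with the cyclic condition $Q_{m+1}=Q_1$, each $D_l$ block diagonal and each $T_l$ strictly upper triangular. Because every factor is triangularised individually, each weight $e^{-\frac12\tr X_lX_l^T}$ splits exactly into block-diagonal plus triangular contributions, the $T_l$ and $Q_l$ integrals become elementary Gaussian and Haar-volume integrals, and the Jacobian --- established in~\cite[Prop. A.26]{Ip15thesis}, not importable from the $m=1$ literature --- produces the Vandermonde in the eigenvalues of $D_1\cdots D_m$ together with the differentials $d\lambda_j^{(l)}$, $dG_s^{(l)}$ of all the partial variables. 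The weights $w_r$ and $w_c$ then arise directly by integrating the Gaussian densities of those partial variables against the constraints $\lambda_j=\lambda_j^{(1)}\cdots\lambda_j^{(m)}$ and $G_s=G_s^{(1)}\cdots G_s^{(m)}$; no induction on $m$, no Meijer $G$ convolution identity, and no a posteriori fixing of constants by $\sum_k p_{N,k}^{P_m}=1$ is required, since every normalisation is tracked explicitly through the decomposition.
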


\begin{proof}
The starting point is to use a generalised real Schur decomposition to triangularise the matrices $\{X_l\}_l$ which appear in the product~\eqref{YX}. Assuming that the product matrix~\eqref{YX} has $k$ real eigenvalues, the decompositions states that for invertible matrices (Gaussian matrices are invertible almost surely) we may write~\cite[Prop. A.26]{Ip15thesis}
\begin{equation}\label{gen-schur}
X_l=Q_l(D_l+T_l)Q_{l+1}^{-1},\qquad l=1,\ldots,m
\end{equation}
with $Q_{m+1}:=Q_1$. Here each $Q_l$ is a real orthogonal matrix in $O^*(N)/O^*(2)^{(N-k)/2}$ with $O^*(N)$ defined to be the set of matrices in $O(N)$ with the first entry in each column positive. Each $D_l$ is a (block) diagonal matrix with
the first $k$ diagonal entries scalars $\{\lambda_1^{(l)},\dots,\lambda_k^{(l)}\}$
and the next $(N-k)/2$ block entries $2 \times 2$ matrices $\{ G_s^{(l)} \}_{s=k+1}^{(N+k)/2}$,
while each $T_l$ is a strictly upper triangular matrix consisting of $N(N-1)/2 - (N-k)/2$ independent Gaussian random variables. 

The generalised Schur decomposition may be verified by applying an ordinary Schur decomposition on the product matrix~\eqref{YX} itself and then using (partial) QR decompositions on $\{Q_l X_l\}_{l=1,\ldots,m-1}$, recursively (see~\cite[Appendix A]{Ip15thesis} for details). We stress that while it is possible to choose $m-1$ of the matrices $D_l$ in~\eqref{gen-schur} to be strictly diagonal rather than block diagonal (due to the $m-1$ QR decompositions), we do not do so as it would complicate the derivation of the Jacobian. 

For the following, it will be convenient to introduce the product $D:=D_1\cdots D_m$ which again is a block diagonal matrix. The first $k$ diagonal entries are scalars, $\{\lambda_t:=\lambda_t^{(1)}\cdots\lambda_t^{(l)}\}_{t=1}^k$, while the latter $(N-k)/2$ entries are $2 \times 2$ matrices, $\{G_s:=G_s^{(1)}\cdots G^{(l)}_s \}_{s=k+1}^{(N+k)/2}$. With this notation, the Jacobian for the above given change of variables reads~\cite[Prop. A.26]{Ip15thesis}
\begin{equation}
\prod_{l=1}^m (d X_l) =
 \prod_{j < p} | \lambda(D_{pp}) - \lambda(D_{jj})|  \prod_{l=1}^m(d T_l) ( Q_l^T dQ_l) 
 \prod_{l=1}^m \Big (\prod_{j=1}^k d \lambda_j^{(l)} \prod_{s=k+1}^{(N+k)/2} dG_s^{(l)} \Big ),
\end{equation}
where $\lambda(D_{pp})$ refers to the eigenvalue(s) of the $(pp)$-th entry of the block diagonal matrix $D$, i.e. $\lambda(D_{pp})=\lambda_p$ for $p=1,\dots,k$ and $\lambda(D_{pp})$ denotes the two (complex) eigenvalues of the $2 \times 2$ block $G_p$ for $p > k$. Thus, using the notation $\{x_j\pm iy_j\}_j$ with $(x_j,y_j)\in\R\times\R_+$ for the complex eigenvalues, we have
\[
 \prod_{j < p} | \lambda(D_{pp}) - \lambda(D_{jj})|=
 \begin{cases}
 \abs{\lambda_j-\lambda_p} & \text{if}\ j<p\leq k,\\
 (\lambda_j-x_p)^2+y_p^2 & \text{if}\ j\leq k<p,\\
  ((x_j-x_p)^2+(y_j-y_p)^2)(x_j-x_p)^2+(y_j+y_p)^2)& \text{if}\ k\leq j<p.\\
 \end{cases}
\]
This notation is the same as used by Edelman~\cite[Eq. (6)]{Ed97}. More compactly, we may write
\[
 \prod_{j < p} | \lambda(D_{pp}) - \lambda(D_{jj})|=
 \abs*{ \Delta\Big(\{\lambda_l\}_{l=1}^k\cup\{x_j\pm iy_j\}_{j=1}^{(N-k)/2}\Big)}\prod_{j=1}^{(N-k)/2}\frac1{2y_j},
\]
where the Vandermonde determinant is defined as in~\eqref{3.1}. 
 
For the weight function in (\ref{Z1}) we have
 \[
 \prod_{l=1}^m e^{-\frac12 {\rm Tr} \, X_l X_l^T}  =
 \prod_{l=1}^m
 e^{- \frac{1}{2} \sum_{s=1}^k (\lambda_s^{(l)})^2 - \frac{1}{2}
 \sum_{s=k+1}^{(N+k)/2} {\rm Tr} \, G_s^{(l)} ( G_s^{(l)})^T}
 e^{- \frac12 \sum_{i < j} (t_{ij}^{(l)})^2},
 \]
where we can integrate out the dependence on $\{T_l\}$ and $\{Q_l\}$ according to
 \[
 \int (d T_l) \, e^{- \frac12 \sum_{i < j} (t_{ij}^{(l)})^2} =
 (2 \pi)^{(N(N-1)/2 - (N-k)/2)/2}\quad\text{and}\quad
 \int (Q_l^T d Q_l) =
 \frac{\pi^{N(N+1)/4-(N-k)/2}}{\prod_{j=1}^N \Gamma(j/2)}.
 \]
The latter is equal to ${{\rm vol} \, O^*(N)}/{({\rm vol} \, O^*(2) )^{(N-k)/2}}$.
 
Using all the above results, it follows that, for a given $k$,
the joint probability measure for the eigenvalues is equal to
\begin{multline}\label{L1}
 \prod_{j < p} | \lambda(D_{pp}) - \lambda(D_{jj}) |\ 
 \prod_{j=1}^k \delta(\lambda_j - \lambda_j^{(1)} \cdots \lambda_j^{(m)} )\, d \lambda_j
 \prod_{s=k+1}^{(N+k)/2} \delta(G_s - G_s^{(1)} \cdots G_s^{(m)})\, (d G_s)\\
 \times
 \prod_{l=1}^m \bigg[
 \frac{1 }{Z_{N}}
 \prod_{j=1}^k \Big( e^{- \frac12  (\lambda_j^{(l)} )^2 }d \lambda_j^{(l)} \Big) 
 \prod_{s=k+1}^{(N+k)/2} \Big( \frac{e^{- \frac12 \tr G_s^{(l)} G_s^{(l) T}}}{\sqrt{2\pi^3}}(dG_s^{(l)}) \Big)
 \bigg ].
\end{multline}
We have, at this point, not yet explicitly introduced the constraint that the eigenvalues of each $G_s$ are not real and thus are consequently a complex conjugate pair. For this reason, we have a similarity with~\cite[Prop.~4.26]{Ip15thesis}.  
 
In order to explicitly impose our constraint that the product matrix has exactly $k$ real eigenvalues, we suppose an orthogonal similarity transformation has been used to bring each matrix $G_i$ into the form
\begin{equation}\label{edel-decomp}
\begin{bmatrix} x & b \\ -c & x \end{bmatrix}
\end{equation}
with $b,c > 0$. The eigenvalues are then $x \pm i y$ with $y^2 = bc$, and we know too (see e.g.~\cite[Proof of Prop.~15.10.1 and Prop.~15.10.2]{Fo10}) that changing variables from the elements
of $G_i$ to $\{x,y,\delta,\theta\}$, where $\theta$ parametrises the orthogonal similarity transformation and
$\delta = b - c$ introduces the Jacobian
\[
 \frac{4\, y\,\abs\delta}{\sqrt{ \delta^2 + 4 y^2}}.
\]
The fact that the integrand in (\ref{L3}) is invariant under real orthogonal transformations
allow us to simplify further. Firstly, we may integrate out $\theta$, which contributes with an extra factor of $\pi$. Secondly, we may replace the matrix $G_i$ by the diagonal matrix of its singular values, $\mu_+$ and $\mu_-$ say.
In terms of the variables $x,y,\delta$ it is straightforward to compute that the singular values are
given by (\ref{L3a}). Combining these results completes the proof.
\end{proof}

Due to the relatively involved expression for the two-point weight~\eqref{L3a}, it might be beneficial to briefly expand on the simplest cases, $m=1$ and $m=2$, where explicit expressions are known.

For $m=1$, the joint PDF (\ref{3.1a}) must, of course, reduce to the classical result (\ref{3.1}).
Inspection of~\eqref{C1a} and~\eqref{L3} shows that the integration therein are immediate for $m=1$ due to the delta functions.
In the real case we then read off that $w_r(\lambda)=e^{-\lambda^2/2}$. 
In the complex case, substituting in (\ref{L3a}) gives
\begin{equation}\label{wc1}
w_c(x,y) =  \sqrt{\frac2\pi}
e^{- (x^2 + y^2)}
\int_{-\infty}^\infty d\delta\, \abs\delta\frac{ e^{- \delta^2/2}}{ \sqrt{\delta^2 + 4 y^2}}  
=  {2} e^{-x^2 + y^2} {\rm erfc} \, (\sqrt{2} y),
\end{equation}
where the second equality first appeared in \cite{Ed97}, albeit out by a factor of 2 as remarked in \cite{Ma11}. 
Substituting these evaluations in~\eqref{3.1a} indeed reproduces~\eqref{3.1}.

Returning now to the case $m=2$, the Meijer $G$-function (\ref{C1a}) is a modified Bessel function,
\begin{equation}\label{m2g}
w_r(\lambda) = 2 K_0(|\lambda|).
\end{equation} 
To simplify (\ref{L3a}) requires simplifying (\ref{L3}).
For this purpose, and without yet restricting $m$, we introduce $2 \times 2$ real matrices $\{M^{(l)}=G^{(l)}\cdots G^{(1)}\}_{l=1,\ldots,m}$ and set $M^{(0)}$ equal to the $2\times 2$ identity matrix. We note that
\[
(d G^{(1)}) \cdots (d G^{(m)})  =
\abs{\det M^{(1)}}^{-1} \cdots \abs{\det M^{(m)}}^{-1}  (dM^{(1)}) \cdots (dM^{(m-1)}),
\]
which allows the integration over $M^{(m)}$ to be carried out in~\eqref{L3} using the delta function, showing that
\begin{equation}\label{L3b}
W(G)  =  \prod_{l=1}^{m-1}\bigg[\int_{\R^{2\times 2}}\frac{(dM^{(l)})}{\abs{\det M^{(l)}}}
 \frac{e^{-\frac12 \tr ((M^{(l-1)}M^{(l-1)T})^{-1}M^{(l)}M^{(l)T})}}{\sqrt{2\pi^3}}\bigg]
 \frac{e^{-\frac12 \tr ((M^{(m-1)}M^{(m-1)T})^{-1}GG^T)}}{\sqrt{2\pi^3}}.
 \end{equation} 
This is the two-by-two matrix version of~\cite[Eq. (2.20)]{Ip15thesis}. 
A further change variables $A^{(l)}=M^{(l)} M^{(l)T}$ for each $l=1,\dots,m-1$ 
shows
\begin{equation}\label{L3c}
W(G)  = 
 \prod_{l=1}^{m-1}\bigg[\int_{A>0}\frac{(dA^{(l)})}{(\det A^{(l)})^{3/2}}
 \frac{e^{-\frac12 \tr ((A^{(l-1)})^{-1}A^{(l)})}}{\sqrt{2\pi}}\bigg]
 \frac{e^{-\frac12 \tr ((A^{(m-1)})^{-1}GG^T)}}{\sqrt{2\pi^3}},
\end{equation} 
where the integration is over positive-definite real symmetric matrices $A^{(l)}$, $l=1,\dots,m-1$.
In the case $m=2$ we can also express the integral in terms of modified Bessel functions.

\begin{lemma}\label{Le1}
We have
\[
I(\mu_+,\mu_-) :=\int_{A > 0} \frac{(dA)}{(\det A)^{3/2}}  
e^{- \frac12  {\rm Tr} \Big (A+ A^{-1}
\Big[\begin{smallmatrix} \mu_+^2 & 0 \\ 0 & \mu_-^2 \end{smallmatrix}\Big] \Big )} 
 = 8 \int_1^\infty \frac{s}{(s^2 - 1)^{1/2}} K_0(s \mu_+) K_0(s \mu_-) \, ds.
\]
\end{lemma}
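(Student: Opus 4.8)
The plan is to evaluate the three-dimensional integral directly in coordinates adapted to the integrand, rather than invoking general matrix-Bessel formulas. Write the positive-definite symmetric matrix as $A=\left[\begin{smallmatrix} a & b \\ b & c\end{smallmatrix}\right]$ with $a,c>0$ and $b^2<ac$, so that $(dA)=da\,db\,dc$, $\det A=ac-b^2$, and, with $M=\left[\begin{smallmatrix}\mu_+^2 & 0 \\ 0 & \mu_-^2\end{smallmatrix}\right]$, one has ${\rm Tr}\,A=a+c$ and ${\rm Tr}(A^{-1}M)=(c\mu_+^2+a\mu_-^2)/(ac-b^2)$. The integrand then depends on $b$ only through $\det A$, which suggests integrating out $b$ first.

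First I would trade the variable $b$ for $D:=\det A=ac-b^2$. As $b$ runs over $(-\sqrt{ac},\sqrt{ac})$ the quantity $D$ runs over $(0,ac)$, and the two signs of $b$ combine to give $\int_{-\sqrt{ac}}^{\sqrt{ac}}(\cdots)\,db=\int_0^{ac}(\cdots)\,dD/\sqrt{ac-D}$. This reduces $I$ to a double integral over $a,c$ with an inner $D$-integral carrying the factor $(ac-D)^{-1/2}D^{-3/2}$ and the exponential $\exp\!\big(-\tfrac12(a+c)-(c\mu_+^2+a\mu_-^2)/(2D)\big)$.

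The key step is the substitution $D=ac/s^2$, so that $D\in(0,ac)$ corresponds to $s\in(1,\infty)$. A short computation shows $(ac-D)^{-1/2}D^{-3/2}\,|dD|=2s\,(ac)^{-1}(s^2-1)^{-1/2}\,ds$, producing exactly the weight $s/\sqrt{s^2-1}$ of the claimed answer; at the same time the exponent becomes $-\tfrac12 a-\mu_+^2s^2/(2a)-\tfrac12 c-\mu_-^2 s^2/(2c)$, which \emph{decouples} the $a$- and $c$-integrations. Exchanging orders of integration then factorizes $I$ as $\int_1^\infty \frac{2s\,ds}{\sqrt{s^2-1}}\big(\int_0^\infty \frac{da}{a}e^{-a/2-\mu_+^2 s^2/(2a)}\big)\big(\int_0^\infty\frac{dc}{c}e^{-c/2-\mu_-^2 s^2/(2c)}\big)$.

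Finally, each inner integral is the classical representation $\int_0^\infty x^{-1}e^{-\beta/x-\gamma x}\,dx=2K_0(2\sqrt{\beta\gamma})$; with $\beta=\mu_\pm^2 s^2/2$ and $\gamma=1/2$ this gives $2K_0(\mu_\pm s)$, and collecting the constants yields $8\int_1^\infty s(s^2-1)^{-1/2}K_0(s\mu_+)K_0(s\mu_-)\,ds$, as required. I do not expect any serious obstacle: convergence at $D\to0$ is guaranteed by the essential singularity of the exponential provided $\mu_\pm>0$, and the only genuine ingenuity lies in recognizing that the substitution $D=ac/s^2$ both generates the $1/\sqrt{s^2-1}$ weight and separates the two remaining one-dimensional integrals.
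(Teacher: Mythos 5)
Your proof is correct, and every step checks out: the identity ${\rm Tr}(A^{-1}M)=(c\mu_+^2+a\mu_-^2)/(ac-b^2)$, the Jacobian $db=dD/(2\sqrt{ac-D})$ with the factor of two from the two signs of $b$, the computation $(ac-D)^{-1/2}D^{-3/2}\,|dD|=2s\,(ac)^{-1}(s^2-1)^{-1/2}\,ds$ under $D=ac/s^2$, and the evaluation of the decoupled one-dimensional integrals via $\int_0^\infty x^{-1}e^{-\beta/x-\gamma x}\,dx=2K_0(2\sqrt{\beta\gamma})$. The paper's proof follows the same broad strategy --- parametrize $A$ by its entries, isolate the determinant as an integration variable, decouple the two diagonal entries, and finish with the same $K_0$ representation; indeed its final substitution $s=1/\sqrt{h}$, where $h$ is (after rescaling) the determinant divided by the product of the diagonal entries, is exactly your $D=ac/s^2$. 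The difference lies in how the determinant is made into an integration variable: the paper writes the constraint $h=b_1b_2-c^2$ as a Fourier-represented delta function $\frac{1}{2\pi}\int dw\, e^{iw(h-(b_1b_2-c^2))}$, integrates the off-diagonal entry $c$ as a Fresnel-type integral, rescales $h\mapsto b_1b_2h$ and $w\mapsto w/(b_1b_2)$, and then collapses the $w$-integral, which is what produces the weight $(1-h)^{-1/2}$ on $h\in(0,1)$. Your direct change of variables $b\mapsto D$ produces the same weight, in the guise of the factor $(ac-D)^{-1/2}$, as an honest Jacobian. What your route buys is that it stays entirely within absolutely convergent real integrals: the delta-function route requires care with conditionally convergent oscillatory integrals and with interchanging orders of integration, none of which arise in your argument, so your version is arguably the cleaner justification of the same computation.
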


\begin{proof}
Write for the $2 \times 2$ positive definite matrix $A$
\[
A = \begin{bmatrix} b_1 & c \\
c & b_2 \end{bmatrix}.
\]
Then $A > 0$ is equivalent to
\[
b_1, b_2 > 0 \qquad {\rm and} \qquad b_1 b_2 - c^2 > 0.
\]
Using the notation $h =  b_1 b_2 - c^2$ for the determinant, and expressing this equation
as a delta function constraint allows us to write
\[
I(\mu_+,\mu_-)  = \frac{1 }{ 2 \pi}
\int_0^\infty db_1  \int_0^\infty db_2   \int_{-\infty}^\infty dc \int_{-\infty}^\infty dw
\int_0^\infty
\frac{dh }{ h^{3/2}}
e^{- \frac12 (b_1 + b_2) - \frac1{2h} (b_1 \mu_+^2 + b_2 \mu_-^2) }e^{i w(h - (b_1 b_2 - c^2))}.
\]
The working now is elementary. We first integral over $c$, change variables $h \mapsto b_1 b_2 h$, $w \mapsto w/b_1 b_2$,
and integrate over $w$, then $b_1$ and $b_2$, using the fact that
\[
\int_0^\infty e^{- \frac12 b - \frac12 \frac{\mu^2 }{ hb}} \, \frac{db }{ b}
= 2 K_0\Big ( \frac{\mu }{ \sqrt{h}} \Big ).
\]
The last step is to change variables $s = 1/\sqrt{h}$.
\end{proof}

Alternative expressions for $I(\mu_+,\mu_-)$ are known. One, which involves not the $K_0$ Bessel
function but rather the $I_0$ Bessel function is based on changing variables to the eigenvalues and
eigenvectors of $A$, $B$, and using the matrix integration formula for the integral
over Haar measure of the $2 \times 2$ orthogonal group restricted to matrices with elements in the
first entry of each column positive,
\[
\frac{1 }{ {\rm vol}  \, O^*(2) }
\int e^{{\rm Tr} \, X O Y O^T} (O^T d O) =
e^{\frac12 (x_1 + x_2)(y_1 + y_2)} I_0(\tau), \qquad \tau = - \frac{(x_1 - x_2)(y_1 - y_2) }{ 2},
\]
implying that \cite{IK14}
\[
I(\mu_+,\mu_-)  = \frac{\pi }{ 2}
\int_0^\infty da_1  \int_0^\infty da_2 \,
\frac{|a_1 - a_2| }{ (a_1 a_2)^{3/2}}
e^{- (a_1 + a_2) }e^{- \frac14 ( \frac1{a_1} + \frac1{a_2}) ( \mu_+^2 + \mu_-^2)}
I_0 \Big ( \frac14 \Big ( \frac1{a_1} - \frac1{a_2} \Big ) (\mu_+^2 - \mu_-^2) \Big ).
\]
Another, which is based on working similar to that used in the proof of Lemma \ref{Le1}, but starting
from (\ref{L3b}) rather than (\ref{L3c}) tells us that \cite{APS10}
\begin{equation}\label{2.17}
I(\mu_+,\mu_-)  = 4 \sqrt{\pi}
\int_0^\infty \frac{1}{\sqrt{t}} \exp \Big ( - (\mu_+^2 + \mu_-^2) t - \frac{1}{4t} \Big ) K_0(2 \mu_+ \mu_- t) \, dt.
\end{equation}
There is some advantage in the form (\ref{2.17}), due to its functional dependence on $\mu_+^2 + \mu_-^2$
and $\mu_+\mu_-$, which according to (\ref{L3a}) are given in terms of $\delta,x,y$ by
\[
\mu_+^2 + \mu_-^2 = \delta^2 + 2 (x^2 + y^2), \qquad \mu_+ \mu_- = x^2 + y^2.
\]
Recalling the definition of $I(\mu_+,\mu_-)$ in (\ref{L3c}), this tells us that
\[
W(G) = \frac{1 }{ 2\sqrt{\pi} {\rm vol} \, O(2)}
\int_0^\infty \frac{1 }{ \sqrt{t}} \exp \Big ( -  (\delta^2 + 2 (x^2 + y^2))t - \frac{1}{4t} \Big ) K_0(2(x^2+y^2)t) \, dt.
\]
Substituting in (\ref{L3a}) and using the integral in (\ref{wc1}) to integrate over $\delta$ we obtain
\cite{APS10}
\begin{equation}\label{L3d}
w_c(x,y) = 4
\int_0^\infty \frac{1 }{ t} \exp \Big ( -  2 (x^2 - y^2)t - \frac{1 }{ 4t} \Big ) K_0(2(x^2+y^2)t) \, {\rm erfc} (2\sqrt{t} y) \,dt.
\end{equation}
In the following, we will see that it is possible to calculate the probability finding exactly $k$ eigenvalues without such explicit knowledge of the two-point weight function~\eqref{L3a}. 
Here, we make note of them to make contact with the existing literature and as a reference for a comment in section~\ref{sec:c-eigenvalues}.

Finally, we note that an important difference compared to the result presented in~\cite[Prop. 4.26]{Ip15thesis} is the shift from the two-by-two matrix weight function~\eqref{L3} to~\eqref{L3a} which will be essential in the remaining sections. 

\section{Generalised partition function, skew-orthogonal polynomials and proof of Theorem~1}
\label{sec:skew}

\subsection{Generalised partition function}

Let us denote the joint PDF (\ref{3.1a}) by $\mathcal Q(P_m)$, and define the
generalised partition function for $k$ real and $(N-k)/2$ complex conjugate pairs of eigenvalues by
\begin{equation}\label{L3c1}
Z_{k,(N-k)/2}[u,v] =
\prod_{j=1}^k\int_\R d\lambda_j \, u(\lambda_j)
\prod_{l=1}^{(N-k)/2} \int_{\R\times\R_+} dx_l dy_l  \,  v(x_l,y_l) 
\mathcal Q(P_m).
\end{equation}
We have that with $u = v = 1$ the generalised partition function~\eqref{L3c1} is the probability of finding $k$ real eigenvalues and $(N-k)/2$ complex conjugate pairs of eigenvalues. Functional differentiation of 
\begin{equation}\label{L3d1}
Z_N[u,v] := \sum_{k=0}^N Z_{k,(N-k)/2}[u,v],
\end{equation}
where the sum is restricted to $k$ of the same parity of $N$ allows the correlation functions to be computed;
see e.g.~\cite[\S 15.10]{Fo10}.

Independent of the specific functional form of $w_r$ and $w_c$ in (\ref{3.1a}), an observation of
Sinclair \cite{Si06} tells us that due to the product of difference $\Delta$, the method of integration over
alternative variables implies that $Z_{k,(N-k)/2}[u,v] $ can be written as a Pfaffian.
The details of the necessary working can be found in e.g.~\cite[Prop.~15.10.3, $N$ even]{Fo10} and
\cite[\S 4.3.1 ($N$ even) and \S 4.3.2 $N$ odd)]{Ma11}. We report the final result only.

\begin{prop}\label{U1}
Let $\{p_{l-1}(x) \}_{l=1,\dots,N}$ be a set of monic polynomials, with $p_{l-1}(x)$ of degree $l-1$. Let
\begin{align}\label{be1}
\alpha_{j,k} & = \int_{-\infty}^\infty dx \, u(x)  w_r(x) \int_{-\infty}^\infty dy \, u(y)  w_r(y)
p_{j-1}(x) p_{k-1}(y) {\rm sgn}(y - x) \nonumber \\
\beta_{j,k} & = 2i \int_{\R\times\R_+} dx dy \, v(x,y) w_c(x,y)
\Big ( p_{j-1}(x+iy) p_{k-1}(x - i y) -   p_{k-1}(x+iy) p_{j-1}(x - i y) \Big ),
\end{align}
and
\begin{equation}
\mu_k := \int_{-\infty}^\infty w_r(x) u(x) p_{k-1}(x) \, dx.
\end{equation}
For $k,N$ even we have
\begin{equation}\label{3.33}
Z_{k,(N-k)/2}[u,v] =  \Big ( \frac1{Z_{k,N}} \Big )^m
[\zeta^{k/2}] {\rm Pf} [ \zeta \alpha_{j,l} + \beta_{j,l}]_{j,l=1,\dots,N},
\end{equation}
while for $k,N$ odd we have
\begin{equation}
Z_{k,(N-k)/2}[u,v] =  \Big ( \frac1{Z_{k,N}} \Big )^m
[\zeta^{(k-1)/2}] {\rm Pf} 
\begin{bmatrix}
[ \zeta \alpha_{j,l} + \beta_{j,l}] & [\mu_j]  \\
[-\mu_l] & 0 
\end{bmatrix} ]_{j,l=1,\dots,N}
\end{equation}
with $[\xi^k]f(\xi)$ defined as in Theorem~\ref{T1} and $Z_{k,N}$ given by~\eqref{normalisation}.
\end{prop}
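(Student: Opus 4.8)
The plan is to substitute the joint PDF~\eqref{3.1a} into the definition~\eqref{L3c1} and convert the resulting $N$-fold integral of a Vandermonde determinant into a Pfaffian by de~Bruijn's method of integration over alternate variables. The first step is to write the (modulus of the) Vandermonde as a genuine determinant. Since the complex eigenvalues occur in conjugate pairs, $\Delta\big(\{\lambda_l\}\cup\{x_j\pm iy_j\}\big)$ is real, and it equals $\det[p_{l-1}(z_j)]_{j,l=1}^{N}$ for \emph{any} family of monic polynomials $p_{l-1}$ with $\deg p_{l-1}=l-1$; this freedom reflects that replacing the monomials $z^{l-1}$ by monic $p_{l-1}$ amounts to elementary column operations. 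I would order the $N$ columns so that the $k$ real eigenvalues supply single columns $p_{l-1}(\lambda_j)$ while each complex pair supplies two adjacent columns $p_{l-1}(x_j+iy_j)$ and $p_{l-1}(x_j-iy_j)$.

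The second step is to remove the modulus and integrate. Exploiting the antisymmetry of $\Delta$ together with the symmetry factors $1/k!$ and $1/((N-k)/2)!$, I would restrict the real eigenvalues to an ordered chamber and replace $|\Delta|$ by $\Delta$; the ordered integration is then exactly of de~Bruijn type. Integrating two real-eigenvalue columns against $u\,w_r$ produces the antisymmetric entries $\alpha_{j,k}=\iint \phi_j(x)\phi_k(y)\,{\rm sgn}(y-x)\,dx\,dy$ with $\phi_j=u\,w_r\,p_{j-1}$, the ${\rm sgn}(y-x)$ being precisely the kernel that de~Bruijn's formula attaches to an ordered double integral. Integrating the two conjugate columns of a single complex pair against $v\,w_c$ over $\R\times\R_+$ produces $\beta_{j,k}$; the prefactor $2i$ is the one that turns the purely imaginary antisymmetrised bracket $p_{j-1}(x+iy)p_{k-1}(x-iy)-p_{k-1}(x+iy)p_{j-1}(x-iy)$ into a real number, so that the whole Pfaffian is real.

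The third step is the generating-function bookkeeping isolating a fixed $k$. For $N=2n$ even, ${\rm Pf}[\zeta\alpha_{j,l}+\beta_{j,l}]$ expands as a sum over perfect matchings of $\{1,\dots,N\}$, and in each matched pair the summand chooses either $\zeta\alpha$ (two real eigenvalues) or $\beta$ (one complex pair). A configuration with exactly $k$ real eigenvalues uses $k/2$ $\alpha$-pairings and $(N-k)/2$ $\beta$-pairings and hence carries the monomial $\zeta^{k/2}$, so that $[\zeta^{k/2}]$ extracts it and yields~\eqref{3.33}. For $N$ odd, one real eigenvalue is forced to remain unpaired; integrating it against a single column produces the moment $\mu_j=\int w_r\,u\,p_{j-1}$, which enters as the bordering row and column of the antisymmetric matrix with a zero corner, and the generating power shifts to $(k-1)/2$.

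The step I expect to be most delicate is the sign bookkeeping in the second paragraph: one must check that passing from $|\Delta|$ to the ordered signed determinant, regrouping columns so that conjugate pairs sit together, and applying de~Bruijn's formula combine to give exactly the stated ${\rm sgn}(y-x)$ and $2i$ factors with no residual sign or combinatorial constant. Once that is settled, the column operations underlying the monic-polynomial freedom and the matching-to-configuration correspondence behind the $\zeta$-expansion are routine, which is why the full derivation can be left to the cited treatments~\cite[Prop.~15.10.3]{Fo10} and~\cite[\S4.3]{Ma11}.
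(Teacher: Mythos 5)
Your proposal is correct and is essentially the paper's own argument: the paper likewise invokes Sinclair's observation \cite{Si06} that the method of integration over alternate variables (de~Bruijn) applied to the joint PDF~\eqref{3.1a} substituted into~\eqref{L3c1} yields the Pfaffian, with the detailed sign and phase bookkeeping deferred to exactly the references you cite, \cite[Prop.~15.10.3]{Fo10} and \cite[\S 4.3]{Ma11}. The only slip is your parenthetical claim that $\Delta$ itself is real --- each conjugate pair contributes the purely imaginary within-pair factor $-2iy_j$, so $\Delta$ is real only when $(N-k)/2$ is even --- but this is immaterial since only $\abs{\Delta}$ enters the PDF, and the stray powers of $i$ are precisely what the factor $2i$ in $\beta_{j,k}$ absorbs, which is the delicate step you correctly flagged.
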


\subsection{Skew orthogonal polynomials}

The matrix $  [ \zeta \alpha_{j,l} + \beta_{j,l}]$ is antisymmetric. For $\zeta = 1$ and $u=v=1$, it is possible
to choose the monic polynomials $\{p_{l-1}(x) \}_{l=1,\dots,N}$ so that this anti-symmetric matrix  is 
block diagonal, with the blocks $2 \times 2$ anti-symmetric matrices
\[
\begin{bmatrix} 0 & h_{j-1} \\
-h_{j -1}& 0 \end{bmatrix},
\]
$j=1,\dots,N/2$, $N$ even, and $j=1,\dots,(N-1)/2$, $N$ odd, with the last diagonal entry 0 in this latter case.
In fact, from a theoretical perspective this is also true for general $\zeta$, however our method for these polynomials (given below) is only valid if $\zeta=1$.
The use of skew-orthogonal polynomials is standard in random matrix theory; see e.g.~\cite[Ch.~6]{Fo10}.
A Gram--Schmidt procedure shows that the construction of such polynomials is always possible, and that they
are unique up to the mapping
\[
p_{2m+1}(x) \mapsto p_{2m+1}(x) + \gamma_{2m} p_{2m}(x).
\]
This mapping, for $\gamma_{2m}$ an arbitrary constant, leaves the skew-orthogonality property unchanged.

With $\alpha_{j,l}$ and $\beta_{j,l}$ specified by~\eqref{be1} define the skew-product 
\begin{equation}\label{Km}
\langle p_j, p_l \rangle :=(\alpha_{j,l} + \beta_{j,l}) |_{u=v=1}.
\end{equation}
In the case $m=1$, when the underlying
point eigenvalues PDF is given by (\ref{3.1}), the corresponding skew-orthogonal polynomials were first determined by
Forrester and Nagao \cite{FN07}. They were found to be
\begin{equation}\label{skew-poly_m=1}
p_{2j}(x) = x^{2j}, \qquad p_{2j+1}(x) = x^{2j+1} - 2j x^{2j-1}
\end{equation}
with normalisation
\begin{equation}\label{skew-norm_m=1}
h_{j-1} := (\alpha_{2j-1,2j} + \beta_{2j-1,2j}) |_{u=v=1}  = 2 \sqrt{2 \pi} \Gamma(2j-1).
\end{equation}
The case $m=2$ has been considered by Akemann and collaborators \cite{APS10,AKP10}.
In fact these authors determined the skew-orthogonal polynomials for a more general model, in which
the matrices $X_1$ and $X_2$ in (\ref{YX}) with $m=2$ are
a general linear combination of Gaussian symmetric and anti-symmetric
matrices, as already noted below Theorem~\ref{T1}. Specialising to the case that
$X_1$ and $X_2$ are both standard Gaussian matrices, we read off the skew-orthogonal polynomials
\begin{equation}\label{skew-poly_m=2}
p_{2j}(x) = x^{2j}, \qquad p_{2j+1}(x) = x^{2j+1} - (2j)^2 x^{2j-1}
\end{equation}
with normalisation
\begin{equation}\label{skew-norm_m=2}
h_{j-1} := (\alpha_{2j-1,2j} + \beta_{2j-1,2j}) |_{u=v=1}  = (2 \sqrt{2 \pi} \Gamma(2j-1))^2.
\end{equation}
Comparing the skew-orthogonal polynomials~\eqref{skew-poly_m=1} and~\eqref{skew-poly_m=2} as well as the normalisations~\eqref{skew-norm_m=1} and~\eqref{skew-norm_m=2} a simple pattern seems apparent: the coefficient in the odd skew-orthogonal polynomials as well as the normalisations constants are raised to powers of $m$, $m=1$ and $m=2$, respectively. This pattern indeed persists in the general case.

\begin{prop}\label{prop:skew}
For the skew-product~\eqref{Km} and $m\in\Z_+$, the polynomials
\begin{equation}\label{1m}
p_{2j}(x) = x^{2j}, \qquad p_{2j+1}(x) = x^{2j+1} - (2j)^m x^{2j-1}
\end{equation}
form a skew-orthogonal set with normalisation
\begin{equation}\label{2m}
h_{j-1} := (\alpha_{2j-1,2j} + \beta_{2j-1,2j}) |_{u=v=1}  = (2 \sqrt{2 \pi} \Gamma(2j-1))^m.
\end{equation}
\end{prop}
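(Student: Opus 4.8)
The plan is to verify, for the bilinear form $\langle\,\cdot\,,\cdot\,\rangle$ of~\eqref{Km} --- which I split as $\langle\,\cdot\,,\cdot\,\rangle_r+\langle\,\cdot\,,\cdot\,\rangle_c$, the two pieces coming from $\alpha$ and $\beta$ in~\eqref{be1} at $u=v=1$ --- the three families of relations encoded by~\eqref{1m}--\eqref{2m}: $\langle p_{2i},p_{2j}\rangle=0$, $\langle p_{2i+1},p_{2j+1}\rangle=0$, and $\langle p_{2i},p_{2j+1}\rangle=h_i\delta_{ij}$. The first two families I would dispose of by parity. Both weights are even in the real-part variable: $w_r(\lambda)$ depends on $\lambda$ only through $\lambda^2$ in~\eqref{C1a}, and $w_c(x,y)$ only through $x^2$ via $\mu_\pm$ in~\eqref{L3a}. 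Applying $(x,y)\mapsto(-x,-y)$ to the real double integral (whence $\sign(y-x)\mapsto-\sign(y-x)$) and $x\mapsto -x$ to the complex weight shows that the integrand of $\langle x^a,x^b\rangle_r$ and of $\langle x^a,x^b\rangle_c$ acquires a factor $(-1)^{a+b+1}$, so each monomial skew product vanishes unless $a+b$ is odd. Since $p_{2i}$ is a pure even monomial and $p_{2j+1}$ a sum of odd monomials, every monomial pairing contributing to $\langle p_{2i},p_{2j}\rangle$ and to $\langle p_{2i+1},p_{2j+1}\rangle$ has $a+b$ even, and both families vanish identically.

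This reduces the proposition to the mixed monomial moments $c_{i,j}:=\langle x^{2i},x^{2j+1}\rangle$, in terms of which~\eqref{1m} gives $\langle p_{2i},p_{2j+1}\rangle=c_{i,j}-(2j)^m c_{i,j-1}$ (the $j=0$ term dropping out since $(2\cdot0)^m=0$). Hence it suffices to show that $c_{i,j}=0$ for $j<i$, that $c_{i,i}=h_i=(2\sqrt{2\pi}\,\Gamma(2i+1))^m$, and that $c_{i,j}=(2j)^m c_{i,j-1}$ for $j\ge i+1$: then $\langle p_{2i},p_{2i+1}\rangle=c_{i,i}-(2i)^m c_{i,i-1}=h_i$ (using $c_{i,i-1}=0$), while the recursion kills every off-diagonal entry. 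In other words the entire content of the proposition is a single first-order recursion for $c_{i,j}$ in $j$ with multiplier $(2j)^m$, broken only by a jump of size $h_i$ as $j$ passes $i$; the coefficient $(2j)^m$ in~\eqref{1m} is exactly what is tuned to this recursion.

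The core is therefore the evaluation of $c_{i,j}$, for which I would use the representation of the skew product as a random matrix average advertised in the introduction. Integrating one variable of $\langle x^{2i},x^{2j+1}\rangle_r$ against $\sign(y-x)$ and integrating by parts in the other reduces the real part to incomplete moments $\int_{-\infty}^y w_r(t)t^c\,dt$, the relevant full moments being elementary, $\int_{-\infty}^\infty w_r(\lambda)\lambda^{2p}\,d\lambda=(\sqrt{2\pi}\,(2p-1)!!)^m$, where the $m$th power is inherited directly from the $m$-fold product in~\eqref{C1a}. The essential mechanism --- transparent already in the $m=1$ reduction $w_c(x,y)=2e^{-x^2+y^2}\erfc(\sqrt2\,y)$ of~\eqref{wc1} --- is that the complementary error function carried by $w_c$ reproduces exactly the Gaussian tail integrals left over from the real part, so that in the sum $\langle\,\cdot\,,\cdot\,\rangle_r+\langle\,\cdot\,,\cdot\,\rangle_c$ the incomplete pieces cancel. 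To run this uniformly in $m$, where the explicit $w_c$ of~\eqref{L3a} is intractable, I would instead recognise that $\langle f,g\rangle$, being an integral over two eigenvalue slots, is an average of an antisymmetric function of the two eigenvalues of a $2\times2$ real Gaussian product matrix $G=G^{(1)}\cdots G^{(m)}$, with $\alpha$ capturing the two-real-eigenvalue configurations and $\beta$ the conjugate-pair ones (the normalisation $\langle p_0,p_1\rangle=Z_2^m$ being the $N=2$ case of Proposition~\ref{U1}). Expressing the antisymmetric pairing through the symmetric functions $\tr G$ and $\det G=\det G^{(1)}\cdots\det G^{(m)}$ and exploiting the multiplicative structure over the $m$ independent factors is what is responsible for the $m$th powers in both the multiplier $(2j)^m$ and the normalisation $h_i$; alternatively one can set up an induction on $m$, writing $w_r$ and $w_c$ as a further multiplicative Gaussian convolution of the $(m-1)$-fold weights.

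The main obstacle is precisely this complex contribution for general $m$. For $m=1$ and $m=2$ the weight $w_c$ is known in closed form --- through $\erfc$ in~\eqref{wc1}, and through Bessel functions via Lemma~\ref{Le1} and~\eqref{L3d} --- and the direct computations of~\cite{FN07,APS10,AKP10} go through; here no such explicit integration is available, and everything hinges on showing that $\langle\,\cdot\,,\cdot\,\rangle_r+\langle\,\cdot\,,\cdot\,\rangle_c$ collapses onto full Gaussian moments, equivalently that the $2\times2$ Gaussian-product-matrix average factorises over its $m$ factors. A second, easily underestimated, difficulty is the bookkeeping of constants: for $m=1$ the real part alone gives $\langle 1,x\rangle_r=2\sqrt\pi$ whereas the total must be $h_0=2\sqrt{2\pi}$, so the complex part is numerically decisive and is exactly where the factor-of-$2$ subtlety noted after~\eqref{wc1} enters --- this is why the proportionality constants discarded in~\cite{IK14} must be reinstated. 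Once the recursion and the single normalisation $c_{i,i}=h_i$ are secured, matching against $(2j)^m$ in~\eqref{1m} completes the proof; as a global consistency check, $\pf[\langle p_{j-1},p_{l-1}\rangle]_{j,l=1}^N=\prod_i h_i$ must reproduce $Z_N^m$, i.e.\ the statement that the probabilities $p^{P_m}_{N,k}$ sum to unity.
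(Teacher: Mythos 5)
Your opening moves are sound: the parity argument for the even--even and odd--odd pairings is correct, and the reduction of the proposition to the mixed moments $c_{i,j}=\langle x^{2i},x^{2j+1}\rangle$ --- vanishing for $j<i$, the recursion $c_{i,j}=(2j)^m c_{i,j-1}$ for $j>i$, and the diagonal value $c_{i,i}=h_i$ --- is exactly the right target. But the central step, the actual evaluation of the $c_{i,j}$ for general $m$, is precisely what you leave open: you name it yourself as ``the main obstacle'' on which ``everything hinges'', and the routes you sketch do not close it. Your $2\times 2$ reduction writes $\langle f,g\rangle$ as an average of the symmetric function $\bigl(f(z_1)g(z_2)-f(z_2)g(z_1)\bigr)/(z_2-z_1)$ over the eigenvalues of a \emph{fixed-size} $2\times 2$ product matrix; granting the normalisation conventions, this requires the mixed moments $\mathbb{E}\bigl[(\tr G)^a(\det G)^b\bigr]$ of a product of $m$ independent Gaussian $2\times2$ matrices, and these do \emph{not} factorise over the $m$ factors: conditioning on $G'=G^{(2)}\cdots G^{(m)}$ gives, already at second order, $\mathbb{E}_{G^{(1)}}\bigl[(\tr G^{(1)}G')^2\bigr]=\tr G'G'^{T}$, i.e.\ singular-value data of the remaining product rather than eigenvalue data, so the ``multiplicative structure over the $m$ independent factors'' that is supposed to produce the powers $(2j)^m$ is simply not there. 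The proposed induction on $m$ is likewise not set up. What remains is a correct reduction wrapped around an unproven core.

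The paper's proof never computes any skew product directly, which is how it sidesteps your obstacle. It invokes the Sommers/Akemann--Kieburg--Phillips formulas \eqref{F1}, a structural theorem for this class of Pfaffian ensembles asserting that $p_{2n}(z)=\langle\det(z\mathbb{I}_{2n}-G)\rangle_G$ and $p_{2n+1}(z)=z\,p_{2n}(z)+\langle\det(z\mathbb{I}_{2n}-G)\tr G\rangle_G$ \emph{are} skew-orthogonal for the skew product \eqref{Km} --- the crucial feature being that the matrix size $2n\times 2n$ is tied to the polynomial degree, rather than fixed at $2\times2$. For $G=P_m$ the permutation expansion of the determinant combined with the uncorrelated-entries property \eqref{F2} kills every term except the identity permutation, so the averages collapse onto the single moment $\sum_{l}\mathbb{E}[y_{l,l}^2]=(2n)^m$, yielding \eqref{1m} with no integration against $w_r$ or $w_c$ at all; skew-orthogonality is then automatic from the cited theorem. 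Finally, the normalisation \eqref{2m} is obtained by running what you call a ``global consistency check'' forwards as a derivation: $\prod_{l=1}^{N/2}h_{l-1}=Z_N^m$ for every even $N$ (total probability equal to one) forces $h_{j-1}=(Z_{2j}/Z_{2j-2})^m=(2\sqrt{2\pi}\,\Gamma(2j-1))^m$ by the duplication formula, so neither a direct evaluation of $c_{i,i}$ nor the factor-of-two bookkeeping you worry about is ever needed. To repair your proposal you would need either to prove a statement of the strength of \eqref{F1}, or to find an independent evaluation of the $c_{i,j}$; as written, the recursion with multiplier $(2j)^m$ is asserted, not established.
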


In \cite{FN07} the method used to find the skew-orthogonal polynomials was to first establish that with $m=1$
\begin{equation}
\langle x^{2j+1}, x^{2k} \rangle =  \left \{
\begin{array}{ll} - 2^{j+k+3/2} j! \Gamma(k+1/2), & j \ge k, \\
0, & j < k, \end{array} \right.
\end{equation}
which in turn made essential use of knowledge of the explicit functional form of $w_r(x)$ and
$w_c(x,y)$. Some of the details of the working are given in \cite[Proof of Prop.~15.10.4]{Fo10}.
Soon after Sommers \cite{Som07} noted that knowledge of the functional form of the averages of
the product of two characteristic polynomials $C_N(z) = \prod_{j=1}^k (z - \lambda_j) \prod_{s=k+1}^{(N+k)/2}
(z - (x_s + i y_s))(z - (x_s - i y_s))$, summed over $k$ contains sufficient information to fully determine
the skew-orthogonal polynomials. Subsequently, Akemann, Kieburg and Phillips~\cite[Eqns.~(4.6)--(4.7)]{AKP10}
gave the explicit matrix averages formulas
\begin{align}\label{F1}
p_{2n}(z) & = \langle \det(z \mathbb I_{2n} - G) \rangle_{G } \nonumber \\
p_{2n+1}(z) & = z p_{2n}(z) + \langle \det(z \mathbb I_{2n} - G) {\rm Tr} \, G \rangle_{G } ,
\end{align}
for the skew-orthogonal polynomials,
where in the present setting the average over $G$ is over the $m$ standard Gaussian matrices
$X_1,\dots,X_m$ of size
$2n \times 2n$.

Forrester \cite{Fo13a} gave a systematic way to compute averages of the form (\ref{F1}) in the cases
that $G$ is drawn from an ensemble invariant under real orthogonal transformations. This method
was based on the use of zonal polynomials, and was built on ideas contained in \cite{FR09}.
Here we will show that elementary methods suffice to evaluate (\ref{F1}).

 
\begin{proof}[Proof of Proposition~\ref{prop:skew}]
 According to (\ref{YX}), $P_m$ is the product of $m$ independent standard Gaussian matrices $X_1,\dots,X_m$.
 Moreover, from the rule for matrix multiplication, and this specification of the $X_i$, we see that elements taken from distinct rows
$j_1,\dots,j_r$ and columns $k_1,\dots,k_r$, each $j_\mu \ne k_\nu$ are uncorrelated, so that with $P_m = [y_{jk}]_{j,k=1,\dots,N}$
 \begin{equation}\label{F2}
 \Big \langle \prod_{j=1}^r y_{j_r,k_r} \Big \rangle_{X_1,\ldots,X_m} = 0.
\end{equation}
 From the definition of a determinant we have
\begin{equation}\label{F1b} 
  \det(z \mathbb I_{2n} - P_m)  = \sum_{\sigma \in S_{2n}} \varepsilon(\sigma)
  \prod_{l=1}^{2n} (z\delta_{l,\sigma(l)} - y_{l,\sigma(l)}),
 \end{equation}
  where $\varepsilon(\sigma)$ denotes the parity of the 
  permutation $\sigma$ and $\delta_{l,\sigma(l)}  $ denotes the Kronecker delta. Averaging over $X_1,\dots,X_m$ using (\ref{F2})
  shows that the only non-zero term comes from the identity permutation
  and furthermore this average is equal to $z^{2n} $. This establishes $p_{2j}(z)$ in~\eqref{1m}.
  
  For the odd polynomials, multiplication by ${\rm Tr} \, P_m = \sum_{l=1}^{2n} y_{l,l}$ shows that a non-zero
  value will appear after averaging when there is a single monomial $y_{l,l}$ in the expansion of
  $  \det(z \mathbb I_{2n} - P_m) $. We see from (\ref{F1b}) that this is only possible in
  the case of the identity permutation, and that we require the
  coefficient of $z^{2n-1}$ therein. Thus
  \[
 \langle \det(z \mathbb I_{2n} - P_m) {\rm Tr} \, P_m \rangle_{X_1,\ldots,X_m} =
 - z^{2n-1} \sum_{l=1}^{2n} \langle y_{l,l}^2 \rangle_{X_1,\dots,X_m} = - z^{2n-1} (2n)^m.
 \]
 Here the final equality follows by noting that $y_{l,l}$ consists of a sum of $(2n)^{m-1}$ terms which are
 monomials in the elements of the $X_i$, and due to (\ref{F2}) the only terms that survives this averaging
 after squaring are the $(2n)^{m-1}$ perfect squares, which contribute unity.
  Substituting this result into~\eqref{F1} establishes $p_{2j+1}(z)$ in~\eqref{1m}. 
 
 It remains to establish (\ref{2m}). On this point, we first note that from the meaning of $Z_{k,(N-k)/2}[u,v]|_{u=v=1}$
as the probability that there are exactly $k$ real eigenvalues, it follows that $Z_N[u,v]|_{u,v=1} = 1$,
where $Z_N[u,v]$ is specified by (\ref{L3d1}). On the other hand, it follows from (\ref{3.33}) that for $N,k$ even
\begin{equation}
Z_N[u,v] =  \Big ( \frac{1 }{ 2^{N(N+1)/4} \prod_{l=1}^N \Gamma(l/2)} \Big )^m
{\rm Pf} [  \alpha_{j,l} + \beta_{j,l}]_{j,l=1,\dots,N}.
\end{equation}
Setting $u=v=1$, and using the skew-orthogonal polynomials, the RHS can be evaluated 
to give
\[
1 =  \Big ( \frac{1 }{ 2^{N(N+1)/4} \prod_{l=1}^N \Gamma(l/2)} \Big )^m \prod_{l=1}^{N/2} h_{l-1},
\]
and (\ref{2m}) follows.
\end{proof}

\begin{remark}\label{mario-remark}
Examination of the above proof shows that invariance of a single matrix entry under the reflection $y_{jk}\mapsto -y_{jk}$ implies
\[
p_{2n}(z)=z^{2n}
\qquad\text{and}\qquad
p_{2n+1}(z)=z^{2n+1}-\langle \tr P_m^2 \rangle z^{2n-1}.
\]
\end{remark}

\subsection{Probability of \textit{k} real eigenvalues}
\label{sec:k-real}

It has already been remarked below the definition of the generalised partition function~\eqref{L3c1} that with $u=v=1$ this quantity can be interpreted as the probability $p_{N,k}^{P_m}$ that for the ensemble of matrices
specified by (\ref{YX}), with each $X_i$ therein an $N \times N$ real standard Gaussian,
there are exactly $k$ real eigenvalues. This assumes $k$ and $N$ have the same parity; if not the probability
is zero. According to Proposition \ref{U1} these probabilities can be written as Pfaffians. Let us suppose the
polynomials therein are furthermore even (odd) when there degree is even (odd). We then know, by the symmetry
of the integrands, that each $(\zeta \alpha_{j,l} + \beta_{j,l})|_{u=v=1} = 0$ unless the parity of $j$ and $l$ is opposite. Furthermore
making use of the fact that the $(\zeta \alpha_{j,l} + \beta_{j,l})|_{u=v=1}$ is anti-symmetric in $j,l$ allows the
Pfaffian to be written as a determinant of half the size, telling us that for $N,k$ even
\begin{equation}\label{s5}
p_{N,k}^{P_m} =  
 \bigg ( \frac{2^{-N(N+1)/4}}{ \prod_{l=1}^N \Gamma(l/2)} \bigg )^m
[\zeta^{k/2}] \det  [ (\zeta \alpha_{2j-1,2l} + \beta_{2j-1,2l} )|_{u=v=1} ]_{j,l=1,\dots,N/2},
\end{equation}
and for $N,k$ odd
\begin{equation}\label{s6}
p_{N,k}^{P_m} =  
\bigg ( \frac{2^{-N(N+1)/4}}{ \prod_{l=1}^N \Gamma(l/2)} \bigg )^m
[\zeta^{(k-1)/2}]  \det\Big[ (\zeta \alpha_{2j-1,2l} + \beta_{2j-1,2l})|_{u=v=1}] \quad [\mu_{2j-1}]  
\Big]_{\substack{j=1,\dots,(N+1)/2 \\ l=1,\dots,(N-1)/2}}.
\end{equation}
We are now well placed to establish (\ref{a1}) and (\ref{14a}).

\begin{proof}[Proof of Theorem \ref{T1}.] We choose the polynomials $\{p_j(x) \}$ as the skew-orthogonal polynomials (\ref{1m})
so we have
\begin{equation}\label{s0}
(\zeta \alpha_{2j-1,2l} + \beta_{2j-1,2l})|_{u=v=1} = (\zeta - 1) \alpha_{2j-1,2l} |_{u=v=1} + h_{j-1} \delta_{j,l},
\end{equation}
with the explicit value of $u_{j-1}$ being given by (\ref{2m}). Thus we have been able to eliminate the
dependence on $\beta_{j,k}$, which from the definition (\ref{be1}) involves the 
weight $w_c(x,y)$ --- a quantity which from (\ref{L3a}) is not in general known in terms of explicit special functions.
The remaining quantity $ \alpha_{2j-1,2l}$ is specified by (\ref{be1}), and the weight therein $w_r(x)$ is given
as a Meijer $G$-function according to (\ref{C1a}). In fact this very same quantity, up to a proportionality has appeared in the earlier study 
\cite[Proposition 3]{Fo14} and we read off the evaluation
\begin{equation}\label{s1}
 \alpha_{2j-1,2k} |_{u=v=1} 
 = 2^{(j+k-1/2)m} a_{j,k}-2^{(j+k-3/2)m}a_{j,k-1},
 \qquad k,l=1,2,\ldots
\end{equation}
where we use the definition (\ref{aa}) with $a_{j,0}:=0$ since the lowest order odd skew-orthogonal polynomial is a monomial. Substituting (\ref{s1}) in (\ref{s0}), and substituting
the result in turn in (\ref{s5}) we obtain after minor manipulation the formula (\ref{a1}).

To deduce (\ref{14a}) we require the additional evaluation, also contained in  \cite[Proposition 3]{Fo14},
$\mu_{2j-1}=\Gamma(j-{1/2})^m$, and similarly substitute in (\ref{s6}).
\end{proof}


For $m=1$ the probabilities $p_{N,k}^{P_1}$ have been known since the late nineties and they are all of the form $r + \sqrt{2} s$ where $r$ and $s$ are rational numbers \cite{Ed97}. Tabulations can be found in \cite[Table 5]{Ed97} and \cite[Table~2]{AK07}.
Recently, an evaluation of the Meijer $G$-function
\[
 \MeijerG{3}{2}{3}{3}{a_1,a_2,c}{b_1, b_2,c+n}{z},\qquad n\in\N
\]
as a summation over a linear combination of $\{ {}_2F_1(\mu+a,\mu+b;\mu+c;1-z) \}_{\mu=0}^n$ has
been given by Kumar \cite{Ku15}, and this was used to show
\begin{equation}\label{G=?}
\MeijerG{3}{2}{3}{3}{3/2-j,3/2-j,1}{0,k,k}{1} =
\pi^2 \frac{\Gamma(k) \Gamma^2(2j+2k-1)}{\Gamma^2(j+k)}
\sum_{\mu=0}^{k-1} \frac{16^{2-\mu-2j-k}\Gamma^2(2\mu+2j-1)}{\Gamma(\mu+1)\Gamma^2(\mu+j)\Gamma(\mu+2j+k-1)},
\end{equation}
which allows us to get explicit expressions for the probabilities $p_{N,k}^{P_2}$ (i.e. $m=2$). 
Note in particular that this is of the form $\pi^2$ times a rational number, a feature which was conjectured
in \cite{Fo14}. Substituting in (\ref{a1}) and (\ref{14a}) in the case $m=2$ makes the structure of the
probabilities explicit for $p_{N,k}^{P_2}$. These are all polynomials of degree $\lfloor N/2 \rfloor$ in $\pi$ with rational coefficients; probabilities for low values of $N$ are tabulated in Table~\ref{table:k-real}. It is worth noting that similar probabilities for the real spherical and the truncated orthogonal ensembles are also given as polynomials in $\pi$ and $1/\pi$; see~\cite{Ma11} and references therein for an extensive summary.

Beyond the cases $m=1$ and $m=2$, evaluation formulas for the Meijer $G$-function in (\ref{aa}) are challenging. In addition to the contour integral representation~\eqref{GG}, we may also write the Meijer $G$-function as an $m$-fold integral on the real line,
\[
\frac{\MeijerG{m+1}{m}{m+1}{m+1}{3/2-j,\ldots,3/2-j,1}{0,k,\ldots,k}{1}}{\Gamma(j+k-1/2)^m}
=\int\limits_1^\infty\frac{dx_m}{x_m}\prod_{\ell=1}^{m-1}
\bigg[\int\limits_0^\infty\frac{dx_\ell}{x_\ell}\frac{(x_\ell/x_{\ell+1})^{j-1/2}}{(1+x_\ell/x_{\ell+1})^{j+k-1/2}}\bigg]
\frac{x_1^k}{(1+x_1)^{j+k-1/2}},
\]
which may be checked to agree with~\eqref{G=?} for $m=2$. 
Such $m$-fold integral representations give a relation to product of random scalars. However, explicit expressions in terms of elementary functions remain unknown for $m\geq3$.

With an explicit method for calculating the probability of finding $k$ real eigenvalues, it seems natural to ask for different types of number statistics. A prime example would be the expected number of real eigenvalues. Albeit such expectation values may be calculated using Theorem~\ref{T1}, we will see in section~\ref{sec:r-eigenvalues} that the spectral density for the eigenvalues can be used to obtain a more efficient formula. The interest in real eigenvalue statistics, of course, extends beyond the expected number of real eigenvalues. Another common question is to ask for extreme value statistics, i.e. the probability that there are abnormally many (or few) real eigenvalues. As mentioned in the introduction, the probability that all eigenvalues are real has already be studied in~\cite{Fo14}, which led to the remarkable conclusion that this probability tends to unity for $m\to\infty$. It is more challenging to ask for the probability of finding only a few real eigenvalues in the large-$N$ limit, say the probability that an even dimensional product matrix has no real eigenvalues. 

A step in this direction was taken in~\cite{Fo15}, where using the relation to the Brownian annihilation process $A+A\to\varnothing$, the first two terms of the large $s$  asymptotics of the probability that there are no real eigenvalues in an interval of size $s$ near the origin for $N\to\infty$ real Ginibre ($m=1$) was computed. It was realized Kanzieper et al.~\cite{KPTTZ15} that heuristic at least this result implies for large $N$
\begin{equation}
\frac1{\sqrt{N}}\log p_{N,0}^{P_1}=-\frac{1}{\sqrt{2\pi}}\zeta\Big(\frac32\Big)
+\frac C{\sqrt N}+\cdots
\end{equation}
with  $\zeta(x)$ denotes the Riemann zeta function and
\begin{equation}
C=\log 2-\frac14+\frac1{4\pi}\sum_{n=2}^\infty\frac1n\Big(-\pi+\sum_{p=1}^{n-1}\frac{1}{p(n-p)}\Big)\approx 0.0627,
\end{equation}
and moreover these authors gave a rigorous proof of the leading term. It is not known how to generalize the workings of~\cite{KPTTZ15}, which are based on Theorem~\ref{T1}, beyond $m=1$. However, our Theorem~\ref{T1} at least allows us to establish numerical estimates, e.g. fitting $aN^{1/2}+bN^0+cN^{-1/2}$ to $\log p_{N,0}^{P_2}$ for $N=50,52,\ldots,120$ suggest that
\begin{equation}
\lim_{N\to\infty}\frac1{\sqrt{N}}\log p_{N,0}^{P_2}\approx -1.474
\end{equation}
for $N$ even. We note that $1.474>\zeta(3/2)/\sqrt{2\pi}\approx 1.042$, which is in the agreement with the expectation that $p_{N,0}^{P_m}$ decreases with increasing $m$.

\begin{table}
\caption{Probabilities, $p_{N,k}^{P_2}$, of finding $k$ real eigenvalues given a product of two $N\times N$ real Ginibre matrices for small $N$ and $k$. The tabulated values are (left to right) the exact expression for the probability, the decimal expansion of the exact value, and the numerical values obtained from a simulation with one million realizations of the matrix product. For comparison, the rightmost column present the equivalent values for the standard Ginibre ensemble (i.e. $m=1$).}\label{table:k-real}
\[
\arraycolsep=1.4pt\def\arraystretch{1.5}
\begin{array}{@{\quad}c@{\quad}|@{\quad}r@{\qquad}c@{\qquad}c@{\qquad}c@{\quad}}
 & \text{Exact} & \text{Approx.} & \text{Simul.} & (P_{N,k}^{P_1})\\
\hline\hline 
p_{2,0}^{P_2} & 1-\frac14\pi & 0.2146 & 0.2144 & (0.2929)\\
p_{2,2}^{P_2} & \frac14\pi & 0.7854 & 0.7856 & (0.7071) \\
\hline 
p_{3,1}^{P_2} & 1-\frac{5}{32}\pi & 0.5091 & 0.5091 & (0.6464) \\
p_{3,3}^{P_2} & \frac{5}{32}\pi & 0.4909 & 0.4909  & (0.3536) \\
\hline
p_{4,0}^{P_2} & 1-\frac{755}{2048}\pi+\frac{201}{8192}\pi^2 & 0.0840 & 0.0841 & (0.1527) \\
p_{4,2}^{P_2} & \frac{755}{2048}\pi-\frac{201}{4096}\pi^2 & 0.6738 & 0.6746 & (0.7223) \\
p_{4,2}^{P_2} & \frac{201}{8192}\pi^2 & 0.2422 & 0.2413 & (0.1250) \\
\hline
p_{5,1}^{P_2} & 1-\frac{4185}{16384}\pi+\frac{10013}{1048576}\pi^2 & 0.2918 & 0.2922 & (0.4567) \\
p_{5,3}^{P_2} & \frac{4185}{16384}\pi-\frac{10013}{524288}\pi^2 & 0.6140 & 0.6137 & (0.5120)\\
p_{5,5}^{P_2} & \frac{10013}{1048576}\pi^2 & 0.0942 & 0.0942 & (0.0313) \\
\hline
p_{6,0}^{P_2} & 1-\frac{3821355}{8388608}\pi+\frac{873624317}{17179869184}\pi^2-\frac{64011585}{68719476736}\pi^3 & 0.0419 & 0.0420 &(0.0935)\\
p_{6,2}^{P_2} & \frac{3821355}{8388608}\pi-\frac{873624317}{8589934592}\pi^2+\frac{192034755}{68719476736}\pi^3 & 0.5140 & 0.5139 & (0.6529) \\
p_{6,4}^{P_2} & \frac{873624317}{17179869184}\pi^2-\frac{192034755}{68719476736}\pi^3 & 0.4152 & 0.4154 & (0.2481) \\
p_{6,6}^{P_2} & \frac{64011585}{68719476736}\pi^3 & 0.0289 & 0.0287 & (0.0055) \\
\hline
p_{7,1}^{P_2} & 1-\frac{22392747}{67108864}\pi+\frac{105417740207}{4398046511104}\pi^2-\frac{31625532537}{140737488355328}\pi^3 &
0.1813 & 0.1817 & (0.3374) \\
p_{7,3}^{P_2} & \frac{22392747}{67108864}\pi-\frac{105417740207}{2199023255552}\pi^2+\frac{94876597611}{140737488355328}\pi^3 &
0.5960 & 0.5959 & (0.6639) \\
p_{7,5}^{P_2} & \frac{105417740207}{4398046511104}\pi^2-\frac{94876597611}{140737488355328}\pi^3 & 0.2157 & 0.2154 & (0.0846)\\
p_{7,7}^{P_2} & \frac{31625532537}{140737488355328}\pi^3 & 0.0070 & 0.0070 & (0.0007)
\end{array}
\]
\end{table}

\section{Correlation functions}
\label{sec:corr}


The Pfaffian formulae of Proposition \ref{U1} for the generalised partition function,
combined with the simplification inherent in the use of skew-orthogonal polynomials,
$\{ p_j(x) \}$, allow the $k$-point correlation to be expressed in the form (\ref{K0}) with entries given
in terms of $\{ p_j(x) \}$. While (\ref{K0}) refers to the real-to-real eigenvalue correlations, this same structure
remains true for the general correlation functions. In fact, the entries of
the correlation kernel also have the same structure; see e.g.~\cite[\S 4.5 and \S 4.6]{Ma11}.

Let $\mathbb C_+=\{x+iy\in\C:y>0\}$ denote the upper complex half-plane, and specify $w_r(x)$ and $w_c(x,y)$
according to (\ref{C1a}) and (\ref{L3a}). Define
\begin{align}\label{AB}
q_j(\mu) &= \left \{
\begin{array}{ll} 
w_r(x) p_j(x), & \mu = x \in \mathbb R \\
(\frac{1}{2} w_c(x,y))^{1/2} p_j(x+iy), & \mu = x + i y \in \mathbb C_+ 
\end{array}   \right. \nonumber \\[.5em]
\tau_j(\mu) &= \left \{
\begin{array}{ll} 
\displaystyle - \frac12 \int_{-\infty}^\infty {\rm sgn} \, (x-y) q_j(y) \, dy & \mu = x \in \mathbb R \\
i q_j(x+iy),  & \mu = x + i y \in \mathbb C_+
\end{array}   \right. \nonumber \\[.5em]
\epsilon(\mu,\eta) &=  \left \{  
\begin{array}{ll}  \frac12 {\rm sgn} \, (\mu - \eta), & \mu,\eta \in \mathbb R, \\
0, & {\rm otherwise} 
\end{array}   \right.  
\end{align}
In this notation, the entries of the correlation kernel (\ref{K0}) in the case of the correlation between
real eigenvalues only, or the correlation between complex conjugate pairs of eigenvalues are given by
\begin{align}\label{AB1}
S(\mu,\eta) & = 2 \sum_{j=0}^{N/2 - 1} \frac1{u_j} \Big (
q_{2j}(\mu) \tau_{2j+1}(\eta) - q_{2j+1}(\mu) \tau_{2j}(\eta) \Big ), \nonumber \\
D(\mu,\eta) & =
2 \sum_{j=0}^{N/2 - 1} \frac1{u_j} \Big (
q_{2j}(\mu) q_{2j+1}(\eta) - q_{2j+1}(\mu) \tau_{2j}(\eta) \Big ), \nonumber \\
\tilde{I}(\mu,\eta) & =
2 \sum_{j=0}^{N/2 - 1} \frac1{u_j} \Big (
\tau_{2j}(\mu) \tau_{2j+1}(\eta) - \tau_{2j+1}(\mu) \tau_{2j}(\eta) \Big ) + \epsilon(\mu,\eta).
\end{align}
For $N$ odd these expressions require modification; see e.g.~\cite{FM08}, \cite[\S 4.6]{Ma11}. For
efficiency of presentation, we will restrict attention to the $N$ even case.

Our main interest in section~\ref{sec:c-eigenvalues} and~\ref{sec:r-eigenvalues} will be spectral densities (one-point correlation functions) and quantities derivable from these. For this reason, we focus on the complex-to-complex and the real-to-real eigenvalue correlations, but real-to-complex correlations can be treated in a similar manner.

\subsection{Complex eigenvalues}
\label{sec:c-eigenvalues}

We see from (\ref{AB}) and (\ref{AB1}) that in the case of the correlation between complex
eigenvalues, up to factors involving $w_c(x,y)$ all the quantities are polynomials, and are related by
\begin{equation}\label{ID-complex}
\tilde I(w,z) = i S(\bar{w},z), \quad
D(w,z) = - i S(w,\bar{z}).
\end{equation}
Thus it suffices to consider $S(w,z)$, where $w=u+iv$ and $z=x+iy$.
For this, (\ref{AB}) and (\ref{AB1})  tell us that
\[
S(w,z) =  2i(w_c(u,v) w_c(x,y))^{1/2}
\sum_{j=0}^{N/2 - 1} \frac{p_{2j}(w) p_{2j+1}(\bar{z}) - p_{2j+1}(w) p_{2j}(\bar{z})}{h_j}
\]
Upon use of the skew-orthogonal polynomials given by Proposition~\ref{prop:poly} this simplifies to
\begin{equation}
S(w,z) = 2i(w_c(u,v) w_c(x,y))^{1/2}\sum_{j=0}^{N-2}\frac{(\bar{z}-w)(w\bar{z})^j}{(2\sqrt{2\pi}\,j!)^m}.
\end{equation}
We are typically interested in either a global scaling regime (where the eigenvalues are concentrated within a region with compact support) or local scaling regimes (where the eigenvalue interspacing is order unity). For simplicity, let us focus on the one-point function (i.e. the spectral density) which for complex eigenvalues is given by $\rho_{(1)}^c(z)=S(z,z)$.

The global scaling regime for the spectral density is known from free probability \cite{BJW10,GT10,OS11},
\begin{equation}\label{L3eA}
\lim_{N \to \infty} N^{m-1}  \rho_{(1)}^c(N^{m/2}w) = \frac{|w|^{(2/m) - 2} }{ m \pi} \chi(1>|w|),
\end{equation}
where $\chi(A) = 1$ if $A$ is true, $0$ otherwise. This holds because the full spectral density (i.e. including complex as well as real eigenvalues) is dominated by the complex spectrum in the global scaling regime. We note that there also exists a global scaling regime for the real spectrum, albeit sub-dominant. We will return to this limit in section~\ref{sec:r-eigenvalues}.

On the local scale, the region near the origin is of greatest interest since it gives rise to new types scalings (i.e. different than the ordinary Ginibre case). The local density near the origin is given by
\begin{equation}\label{pse}
\lim_{N \to \infty} \rho_{(1)}^c(z) = \frac{2y\,w_c(x,y)}{(2 \sqrt{2 \pi})^m}\MeijerG{1}{0}{0}{m}{-}{0,\ldots,0}{-\abs{z}^2},
\end{equation}
since
\begin{equation}\label{sum-hyper-meijer}
\sum_{j=0}^{\infty} \frac{x^{j} }{ (j!)^m}=\hypergeometric{0}{m-1}{-}{1,\ldots,1}{x}
=\MeijerG{1}{0}{0}{m}{-}{0,\ldots,0}{-x}.
\end{equation}
We recall from section~\ref{sec:jpdf} that the weight function $w_c(x,y)$ has an explicit and concise expression for $m=1,2$ but not for $m>2$. We note that if $m=1,2$ then the Meijer $G$-function in~\eqref{pse} evaluates as
\[
\MeijerG{1}{0}{0}{1}{-}{0}{-\abs{z}^2}=e^{-\abs z^2}
\qquad\text{or}\qquad
\MeijerG{1}{0}{0}{2}{-}{0,0}{-\abs{z}^2}=I_0(2\abs{z})
\]
with the latter being a modified Bessel function. Combining this with the weight functions from section~\ref{sec:jpdf} reproduces known formulae for the density (the $m=2$ case was given in~\cite{APS10}).


\subsection{Real eigenvalues}
\label{sec:r-eigenvalues}

In this section we focus on the part of the spectrum which is located on the real axis. Similarly to the complex spectrum described above, all correlations may be expressed in terms of the pre-kernel $S(x,y)$. We see from (\ref{AB}) and (\ref{AB1}) that
\begin{equation}\label{ID-real}
D(x,y) = - \frac{\partial }{ \partial y} S(x,y), \qquad
\tilde{I}(x,y) = - \int_x^y S(t,y) \, dt + \frac12 {\rm sgn} \, (x-y),
\end{equation}
which produce the correlation functions by insertion in~\eqref{K0}. We note that the relations between the pre-kernels~\eqref{ID-real} are more complicated for the real-to-real correlations than for the complex-to-complex correlations where the pre-kernels are related according to~\eqref{ID-complex}. On the other hand, the weight functions are simpler in the real case~\eqref{C1a} than in the complex case~\eqref{L3a}.

Using (\ref{AB}) and (\ref{AB1}) and the skew-orthogonal polynomials (Proposition~\ref{prop:poly}), we write the pre-kernel as
\begin{equation}\label{sxy}
S(x,y) = \int_{-\infty}^\infty dv\,(x-v)\,\sign(y-v)\, w_r(x)  w_r(v) \sum_{j=0}^{N-2} \frac{(xv)^j }{(j!)^m}.
\end{equation}
For $m=1$ (i.e. the ordinary Ginibre ensemble), the sum may be rewritten as an incomplete gamma function times an exponential and the integral over $v$ can be performed, which yields~\cite{EKS94}
\begin{equation}\label{Sm=1}
S(x,y)\vert_{m=1}=\frac1{\sqrt{2\pi}}
\bigg(
e^{-(x-y)^2/2}\frac{\Gamma(N-1,xy)}{\Gamma(N-1)}+2^{(N-3)/2}e^{-x^2/2}x^{N-1}\sign(y)\frac{\gamma((N-1)/2,y^2/2}{\Gamma(N-1)}
\bigg).
\end{equation}
This formulation of the pre-kernel is extremely useful in the study of large-$N$ asymptotics. Unfortunately there are no direct generalisation of this result to $m\geq 2$, which makes asymptotic analysis more challenging. However, it is possible to perform the integral over $v$ in~\eqref{sxy} for arbitrary $m$. To do so, we rewrite~(\ref{sxy}) as
\begin{equation}\label{sxy1}
S(x,y)=\sum_{j=0}^{N-2} \frac{w_r(x)\,x^j}{(2\sqrt{2\pi}\,j!)^m}(x A_j(y) - A_{j+1}(y)),\qquad
A_j(y):=  \int_{-\infty}^\infty w_r(v)  \sign (y-v) v^j \, dv.
\end{equation}
Now, standard identities for the Meijer $G$-function give
\begin{equation}\label{CAa}
A_j(y) = \left \{ 
\begin{array}{ll} \displaystyle 
- 2^{m(1+j)/2} \MeijerG{m+1}{0}{1}{m+1}{1}{0, (1+j)/2,\dots,(1+j)/2}{\frac{y^2}{2^m}}, & j \ \textup{odd} \\[.4cm]
 \displaystyle  
 y^{1+j} \MeijerG{m}{1}{1}{m+1}{-(j-1)/2}{0,\dots,0,-(j+1)/2}{\frac{y^2}{2^m}}, 
  & j \ \textup{even}. 
\end{array} \right.
\end{equation}
This form of the pre-kernel $S(x,y)$ is useful if we are interested in the expected number of real eigenvalues. We recall that the expected number of real eigenvalues, $\E(\#\text{reals})$, can be found by integration over the real spectral density $\rho^r_{(1)}(x)=S(x,x)$. Thus, it follows from~\eqref{sxy1} that
\begin{equation}\label{N6}
\mathbb E (\#{\rm reals})  = \int_{-\infty}^\infty \rho_{(1)}^r(x) \, dx 
= 2\sum_{j=0}^{N-2} \frac{\alpha_{j+1,j+2}}{(2\sqrt{2\pi}\,j!)^m} ,
\end{equation}
where
\[
\alpha_{j,k} = \int_{-\infty}^\infty dx  \int_{-\infty}^\infty dy \, w_r(x) w_r(y) x^{j-1} y^{k-1} {\rm sgn} \, (y-x).
\]
The quantity $ \alpha_{j,k} $ is precisely the same quantity appearing in the study \cite{Fo13}, which evaluates to
$\alpha_{2j-1,2k} = 2^{(j+k-1/2)m} a_{j,k}$ with $a_{j,k}$ given by the Meijer $G$-function (\ref{aa}). In the case where the first index
of $\alpha_{j,k}$ is even and the second index odd, we use the anti-symmetric property $\alpha_{j,k}=-\alpha_{k,j}$. This gives
\begin{equation}\label{Ereal}
\mathbb E (\#{\rm reals})   =  2\sum_{j=0}^{N-2} (-1)^j\Big(\frac{2^{j}}{\sqrt{\pi}\,j!}\Big)^ma_{\lceil j/2+1\rceil,\lfloor j/2+1\rfloor},
\end{equation}
where $\lceil\cdot\rceil$ and $\lfloor\cdot\rfloor$ denote the ceiling and floor function, respectively. 
We recall that the formulae above assume that $N$ is even (for odd $N$ the expression~\eqref{Ereal} is altered by the addition of unity). As already mentioned, an evaluation of $a_{j,l}$ in terms of arithmetic constants is only known for $m=1,2$; consequently the same holds for~\eqref{Ereal}. The $m=1$ case is known since the mid nineties~\cite{EKS94}, while the $m=2$ case is evaluated using~\eqref{G=?}; the results for small $N$ are tabulated in Table~\ref{table:E(real)}. 
As anticipated, Table~\ref{table:E(real)} reveals that the expected value of real eigenvalues are consistingly larger for $m=2$ than for $m=1$. For $m>2$ a computation of the expectation value~\eqref{Ereal} requires numerical evaluation of the Meijer $G$-functions.
The expected number of real eigenvalues can, of course, also be obtained using the probabilities given by Theorem~\ref{T1}. In fact, for $m=2$ and small $N$ the expected number of real eigenvalues follows immediately from Table~\ref{table:k-real}, e.g. for $N=4$ we see that
\[
\E(\# \text{reals})\vert_{m=2,N=4}=
 0\Big(1-\frac{755}{2048}\pi+\frac{201}{8192}\pi^2\Big)
+2\Big(\frac{755}{2048}\pi-\frac{201}{4192}\pi^2\Big)
+4\Big(\frac{201}{8192}\pi^2\Big)
=\frac{755}{1024}\pi,
\]
which agrees with Table~\ref{table:E(real)}. 

\begin{table}[htbp]
\caption{Expected value of the number of real eigenvalues for a product of two $N\times N$ Ginibre matrices.  The tabulated values are (top to bottom) the exact expression for the probability, the decimal expansion of the exact value, and numerical values obtained from a simulation with one million realizations of the matrix product. The bottom row shows the corresponding expectation values for the Ginibre ensemble for comparison. It is evident that the exact expressions exhibit a special arithmetic pattern.}\label{table:E(real)}
\[
\arraycolsep=1.4pt\def\arraystretch{1.5}
 \begin{array}{@{\quad}c@{\quad}|@{\quad}c@{\qquad}c@{\qquad}c@{\qquad}c@{\qquad}c@{\qquad}c@{\quad}}
 \mathbb E[\#\text{reals}]_{m=2} & N=2 &  N=3 &  N=4 &  N=5 & N=6 & N=7\\
 \hline\hline
\text{Exact} &
\frac12\pi &
1+\frac{5}{16}\pi &
\frac{755}{1024}\pi &
1+\frac{4185}{8192}\pi &
\frac{3821355}{4194304}\pi &
1+\frac{22392747}{33554432}\pi 
\\
\text{Aprox.} & 1.5708 & 1.9817 & 2.3163 & 2.6049 & 2.8622 & 3.0966 \\
\text{Simul.} & 1.5704 & 1.9813 & 2.3168 & 2.6030 & 2.8607 & 3.0948 \\
(\mathbb E[\#\text{reals}]_{m=1}) & (1.4142) & (1.7071) & (1.9445) & (2.1490) & (2.3312) & (2.4971)
\end{array}
\]
\end{table}

Let us return to the pre-kernel~\eqref{sxy} and consider large-$N$ asymptotics for the real spectral density. Similarly to section~\ref{sec:c-eigenvalues} we focus on the local density near the origin and the global density. Using~\eqref{sum-hyper-meijer}, it is immediately seen that the local scaling regime near the origin gives (\ref{r-local}) announced in Theorem \ref{T2}.
Compared to the same result for the complex density~\eqref{pse}, the real density has the advantage that the weight function $w_r(x)$ has a known expression as a Meijer $G$-function~\eqref{C1a} for all $m$ while $w_c(x,y)$ does not. We note that for $m=1$ the Meijer $G$-functions in~\eqref{r-local} are all simple exponentials; this allows integration over $v$ and confirms that the local spectral density is constant for $m=1$. Moreover, for $m=1$ the corresponding $k$-point correlation takes on the
explicit form
\begin{equation}\label{K1}
\mathbf K^{\rm rr} (x,y) = \begin{bmatrix} \displaystyle \frac{1 }{ \sqrt{2 \pi}} (y - x) e^{-(x-y)^2} & 
\frac{1 }{ \sqrt{2 \pi}} e^{-(x-y)^2}  \\
 -   \frac{1 }{ \sqrt{2 \pi}}  e^{-(x-y)^2}  & \frac{1 }{ 2} {\rm sgn} \, (x-y) \, {\rm erfc}(|x-y|/\sqrt{2})
\end{bmatrix},
\end{equation}
as obtained in \cite{FN07,Som07,BS09}. We remark that it has been argued by Beenakker and co-workers \cite{BEDPSW13} that the statistical state implied by
(\ref{K1}) is realised by the level crossings of so-called Majorana zero modes for a disordered 
semiconducting wire at a Josephson junction, 
in a weak magnetic field.  And this same correlation kernel 
appears in the seemingly unrelated problem  of 
the annihilation process  $A + A \to \emptyset$ in the limit $t \to \infty$ \cite{MbA01,TZ11}.

A study of the global scaling regime for the real spectrum is more challenging. Unlike the complex spectral density (section~\ref{sec:c-eigenvalues}), we have no help from free probability. A qualified guess for this spectral density might be obtained by looking at the $m$-th power of a real Ginibre matrix rather than at the product of $m$ independent matrices. It is immediate that the $m$-th power and the $m$-th product share the same complex macroscopic spectral density, thus assuming that this extends to the real spectrum we expect that
\begin{equation}\label{rho-global-real}
\lim_{N\to\infty}\frac{N^{(m-1)/2}\rho^r_{(1)}(N^{m/2}x)}{\E(\#\text{reals})}=\frac{\abs{x}^{(1/m)-1}}{2m}\chi(x^2<1),
\end{equation}
where $\chi(A)$ is defined as in~\eqref{L3eA}. For $m=1$ the density~\eqref{rho-global-real} is well-known~\cite{EKS94}; a verification follows from~\eqref{Sm=1} using known asymptotics for the incomplete gamma functions. Moreover, we see that the real spectrum~\eqref{rho-global-real} develops a non-integrable singularity at the origin when $m$ tends to infinity similarly to~\eqref{L3eA} as we would expect.
For $m\geq2$ we have no rigorous derivation of~\eqref{rho-global-real} but the form~\eqref{rho-global-real} is supported by (i) a heuristic saddle point analysis and (ii) numerical data. 

Let us first look at the saddle point analysis, which takes~\eqref{sxy} as the starting point. The first step is to introduce an approximation for the sum in~\eqref{sxy}. We know from~\cite[Appendix C]{AB12} that
\begin{equation}
 e^{-mNx^{1/m}}\sum_{j=0}^{N-2}\frac{(Nx)^j}{(j!)^m}\approx \frac1{\sqrt{m}}\Big(\frac{2\pi x^{1/m}}N\Big)^{(1-m)/2}
\end{equation}
for $\abs x<1$ while exponentially suppressed in $N$ for $\abs x>1$. An approximation for the weight function is known from the literature on special functions~\cite{Fi72}, and we have
\begin{equation}
w_r(N^{m/2}x)\approx\frac1{\sqrt{m}}\Big(\frac{4\pi x^{2/m}}{N}\Big)^{(m-1)/2}e^{-\frac12mNx^{2/m}}.
\end{equation}
We insert these approximations into~\eqref{sxy} and want to evaluate the integral over $v$ using a saddle point approximation. Note that there are two maxima of the integrand symmetrically distributed around $v=x$ (the integrand is equal to zero at $v=x$). These two maxima tend to $x$ from left or right, respectively, as $N$ tends to infinity. Thus, we will use an ansatz $v_*=x\pm f(x)$ for our saddle points where $f(x)$ is sub-dominant in $N$. With this ansatz and expanding to lowest order, the saddle points are found to be
\[
v_*\approx x\pm\sqrt{\frac mN}\,x^{1-(1/m)}.
\]
Evaluation at either of these saddle points yields the conjectured form~\eqref{rho-global-real} up to a normalisation. 

Finally, let us compare the density~\eqref{rho-global-real} with a simulation of the random matrix product. Figure~\ref{fig:histogram} shows the visual similarity between the density~\eqref{rho-global-real} for $m=2$ and numerical data stemming from a simulation of $1\,000$ matrix products with $N=1024$. It should be noted that convergence is expected to be exponentially fast in the bulk but considerably slower near the edges. Similar numerical tests have been performed for $m=3,4,5$ and it has been verified that the difference between the analytic formula~\eqref{rho-global-real} and the numerical data decreases with increasing $N$. Furthermore, we expect that the real global density~\eqref{rho-global-real} is universal in the sense that the Gaussian entries may be replaced by other independent variables under suitable assumptions on their moments. This type of universality is known to hold for the complex spectra~\cite{GT10,OS11} and the expectation that such results extend the real spectra is strengthend by numerical comparison generated from random sign ($\pm 1$) matrices. Although it seems a very natural problem, this type of universality for the real global spectrum has received little attention in the literature; this is true even for the classical Ginibre ensemble ($m=1$).

\begin{figure}[htbp]
\centering
\includegraphics{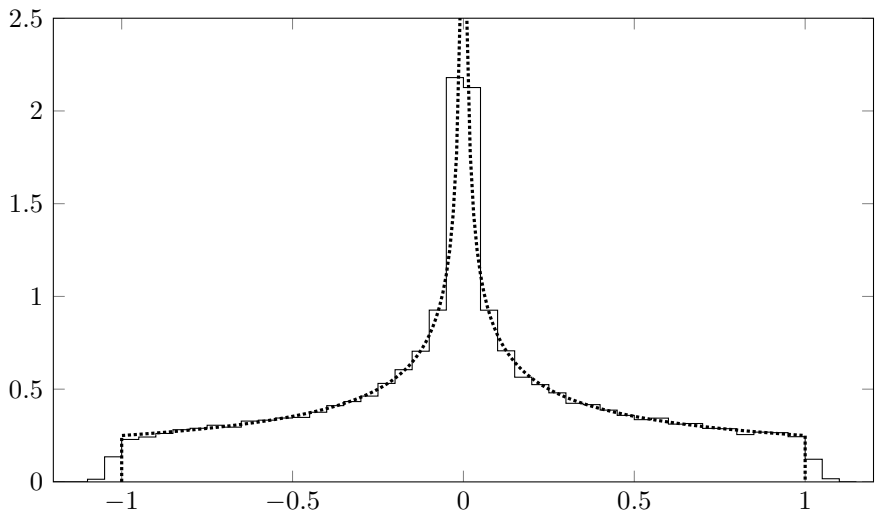}
\caption{The dotted curve shows the global spectral density~\eqref{rho-global-real} for $m=2$, while the histogram shows the distribution of the real eigenvalues from $1\,000$ realisations of a product of two $1024\times 1024$ real asymmetric Gaussian matrices ($36\,390$ eigenvalues in total).}\label{fig:histogram}
\end{figure}

\section{Rectangular matrices}
 
A generalisation to the case of rectangular matrices is also available and we briefly treat it here. The main idea when dealing with a product of random matrices is to reformulate problem as a product of square random matrices with the same eigenvalue properties; this is possible due to a general reduction procedure presented in~\cite{IK14} (see also~\cite[Prop. 2.4]{Ip15thesis}). After this reformulation, the approach is similar to the previous sections because Proposition~\ref{U1} as well as the formulae~\eqref{AB} and~\eqref{AB1} are completely general. Due to this similarity we will only sketch the main ideas here.

We consider a product matrix,
\begin{equation}\label{YXrect}
P_m^\nu=X_1\cdots X_m,
\end{equation}
where each matrix $X_i$ has dimensions $(N+\nu_{i-1})\times(N+\nu_i)$ with $\{\nu_i\}_{i=0,\ldots,m}$ denoting non-negative integers such that $\nu_0=\nu_m=0$. Here the constraint is introduced to ensure that the product matrix is square and has $N$ non-trivial eigenvalues. We note that if $\nu_0=\nu_m>0$ but $\nu_j=0$ for some $0<j<m$ (i.e. the smallest matrix dimension is still $N$) then there will be $\nu_0$ eigenvalues which are trivially equal to zero (and therefore real) but the joint PDF otherwise remains the same except for an obvious change in normalisation. Consequently, all formulae given below may effortlessly be extended to the $\nu_0>0$ case if desired.
 
The generalisation of the probabilities~\eqref{11} with~\eqref{aa} for a purely real spectrum have already appeared in the thesis~\cite[Prop. 4.29]{Ip15thesis}. They are given by
\begin{equation}\label{11rect}
p_{N,N}^{P_m^\nu} = 
\prod_{k=1}^m\prod_{j=1}^N\frac{1}{\Gamma\big(\frac{j+\nu_k}2\big)}\times 
\begin{cases}
\det\Big[ [a_{j,k}^\nu]_{k=1,\ldots,N/2}^{j=1,\ldots,N/2}\Big], & N\ \text{even}\\[2mm]
\det\Big[ [a_{j,k}^\nu]_{k=1,\ldots,(N-1)/2}^{j=1,\ldots,(N+1)/2}\quad 
[\tilde a_j^\nu]_{j=1,\ldots,(N+1)/2}\Big], & N\ \text{odd}
\end{cases}
\end{equation}
with
\begin{equation}\label{aa-rect}
a^\nu_{j,k}:=\MeijerG{m+1}{m}{m+1}{m+1}{\frac{3+\nu_1}{2}-j,\dots,\frac{3+\nu_m}{2}-j,1}{0,k+\frac{\nu_1}2,\dots, k+\frac{\nu_m}2}{1}
\qquad\text{and}\qquad
\tilde a_j^\nu:=\prod_{k=1}^m\Gamma\Big(j + \frac{\nu_k-1}2\Big).
\end{equation}
These formulae allow us to make some straightforward generalisations of the exact expressions presented by Kumar~\cite{Ku15} in the $m=2$ case. Following~\cite{Ku15}, we have 
\begin{equation}\label{meijer-eval}
 \MeijerG{3}{2}{3}{3}{\tfrac32-i,\tfrac32-\tfrac\nu2-i,1}{0,\tfrac\nu2+j,j}{1} 
 =\Gamma(j)\Gamma(i+j-\tfrac12)\Gamma(i+j+\tfrac\nu2-\tfrac12)
 \sum_{k=0}^{j-1}\frac{\Gamma(k+i-\tfrac12)\Gamma(k+i+\tfrac\nu2-\tfrac12)}{\Gamma(k+1)\Gamma(k+2i+j+\nu-1)}.
\end{equation}
The next step would be to rewrite gamma functions with a non-integer argument using Gauss' duplication formula. The right-hand side of~\eqref{meijer-eval} evaluates as $r\pi^2$ for even $\nu$ and $r\pi$ for odd $\nu$ where $r$ denotes some rational constant (depending on both $N$ and $\nu$).  This difference in the power of $\pi$ for even and odd $\nu$ has a remarkable consequence: for even $\nu$ the probabilities~\eqref{11rect} are given as a rational number times $\pi^{\lfloor N/2\rfloor}$ but for odd $\nu$ these constants are simple rational constants (i.e. there is no powers of $\pi$).
The probabilities of a purely real spectrum are tabulated in Table~\ref{table:rect-1} for small values of $N$ and $\nu$.

\begin{table}[htbp]
\caption{Consider a product of two Gaussian matrices, $X_1X_2$, with dimensions $N\times (N+\nu)$ 
and $(N+\nu)\times N$ for $X_1$ and $X_2$, respectively. The table provides exact probabilities for a purely real spectrum for various values of $N$ and $\nu$ (the $\nu=0$ column have previously appeared in~\cite{Ku15}). These probabilities are all given as a rational number times a power of $\pi$ depending on $N$ and whether $\nu$ is odd or even.}\label{table:rect-1}
\[
\arraycolsep=1.4pt\def\arraystretch{1.5}
\begin{array}{@{\quad}c@{\quad}|@{\quad}c@{\qquad\quad}c@{\qquad\quad}c@{\qquad}c@{\quad}}
 p_{N,N}^{P_2^\nu} & \nu=0 & \nu=1& \nu=2 & \nu=3 \\
 \hline\hline
 N=2 & \frac{1}{4}\pi & \frac{1}{2} & \frac1{16}\pi & \frac1{18}  \\
 N=3 & \frac{5}{32}\pi & \frac14 & \frac{7\pi}{256}\pi  & \frac1{45} \\
 N=4 & \frac{201}{8192}\pi^2 & \frac{3}{64} & \frac{233}{524288}\pi^2 & \frac{331}{1512000} \\
 N=5 & \frac{10013}{1048576}\pi^2 & \frac{23}{1152} & \frac{74989}{402653184}\pi^2 & \frac{18461}{190512000} \\
 N=6 & \frac{64011585}{68719476736}\pi^3 & \frac{311}{294912} &  \frac{16981105}{35184372088832}\pi^3 & \frac{41938693}{394314117120000} \\
 N=7 & \frac{31625532537}{140737488355328}\pi^3 & \frac{1349}{5898240}
    & \frac{71615920731}{720575940379279360}\pi^3 & \frac{5963869169}{81934593740800000}
\end{array}
\]
\end{table}

As we have seen in previous sections, to extend the probabilities for a purely real spectrum~\eqref{11rect} to the probabilities $P_{N,k}^{P_m^\nu}$ we need a formula for the joint PDF of the eigenvalues and a formula for the skew-orthogonal polynomials, i.e. generalisations of Theorem~\ref{T3} and Proposition~\ref{prop:skew}. Given such generalisations the rest of the results presented in previous sections may be extended as well due to the generality of Proposition~\ref{U1}.

\begin{prop}\label{prop:jpdf-rect}
Given a Gaussian product matrix~\eqref{YXrect} of dimension $N$ with $k$ real eigenvalues, $\{\lambda_l\}_{l=1}^k$, and $(N-k)/2$ complex conjugate pairs of a eigenvalues, $\{x_j\pm iy_j\}_{j=1}^{(N-k)/2}$, the joint PDF for these eigenvalues is given by
\begin{equation}
\frac{1}{ k! ((N-k)/2)! } \frac1{Z_{N}^{m,\nu}}
\abs[\Big]{ \Delta\Big(\{\lambda_l\}_{l=1}^{k} \cup \{ x_j \pm i y_j \}_{j=1}^{(N-k)/2}\Big) }
\prod_{j=1}^k w_r^\nu(\lambda_j)\prod_{j=k+1}^{(N+k)/2} w_c^\nu(x_j,y_j)
\end{equation}
where
\begin{equation}
w_r^\nu(\lambda) =\MeijerG[\bigg]{m}{0}{0}{m}{-}{\frac{\nu_1}2, \dots, \frac{\nu_{m-1}}2,0}{\frac{\lambda^2}{2^m}}=
\prod_{j=1}^m \bigg[\int_\R d \lambda^{(j)}\Big(\frac{\lambda^{(j)}}{2}\Big)^{\nu_j/2}e^{-\frac12(\lambda^{(j)})^2}\bigg] \,
\delta( \lambda - \lambda^{(1)} \cdots \lambda^{(m)}).
\end{equation}
\begin{equation}
 w_c^\nu(x,y) = 2\pi\, \int_\R d\delta\,\frac{|\delta|}{\sqrt{\delta^2+4y^2}}\,
 W^\nu\Big(\begin{bmatrix}\mu_+&0\\0&\mu_-\end{bmatrix}\Big)
\end{equation}
with $\mu_\pm$ as in~\eqref{L3a} and
\begin{equation}
 W^\nu(G)  =
 \prod_{l=1}^m\bigg[\int_{\R^{2\times 2}}(dG^{(l)})\det\Big(\frac{G^{(l)}G^{(l)T}}{2}\Big)^{\nu_l/2}
 \frac{e^{-\frac12 \tr G^{(l)}G^{(l)T}}}{\sqrt{2\pi^3}}\bigg]
 \delta(G - G^{(1)} \cdots G^{(m)}).
\end{equation}
The normalisation is given by
\begin{equation}\label{normalisation-rect}
Z_{N}^{m,\nu}=2^{mN(N+1)/4}\prod_{l=1}^m\prod_{j=1}^N\Gamma\Big(\frac{j+\nu_l}{2}\Big).
\end{equation}
\end{prop}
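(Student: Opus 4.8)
The plan is to mirror the proof of Theorem~\ref{T3}, the only genuinely new ingredient being the reduction of the rectangular factors to square ones. First I would invoke the reduction procedure of Ipsen and Kieburg \cite{IK14} (see also \cite[Prop.~2.4]{Ip15thesis}): since $\nu_0=\nu_m=0$ the product $P_m^\nu$ is square with exactly $N$ non-trivial eigenvalues, and integrating out the rectangular surplus of each $X_l$ replaces the product of rectangular Gaussians by a product of $N\times N$ matrices $G^{(l)}$ whose $N$ non-trivial eigenvalues coincide with those of $P_m^\nu$ and which carry the induced chiral weight proportional to $\det(G^{(l)}G^{(l)T}/2)^{\nu_l/2}\,e^{-\frac12\tr G^{(l)}G^{(l)T}}$. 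The power $\nu_l/2$ is the $\beta=1$ analogue of the exponent familiar from induced Ginibre ensembles, and the single-factor normalisation $\int_{\R^{N\times N}}(dG)\,\det(GG^T/2)^{\nu_l/2}e^{-\frac12\tr GG^T}$, which by the standard reduction to a Wishart integral is proportional to $\prod_{j=1}^N\Gamma((j+\nu_l)/2)$, assembles together with the overall Gaussian scalings into the constant $Z_N^{m,\nu}$ of \eqref{normalisation-rect}.

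With the rectangular problem thus converted into a square one, I would run the argument of Theorem~\ref{T3} essentially verbatim. Applying the generalised real Schur decomposition \eqref{gen-schur} to the $G^{(l)}$ and using the accompanying Jacobian produces the same Vandermonde factor $|\Delta(\{\lambda_l\}\cup\{x_j\pm iy_j\})|$ and the same block structure, with each $D_l$ having scalar entries $\lambda_s^{(l)}$ in the real channel and $2\times2$ blocks $G_s^{(l)}$ in the complex channel. The crucial observation is that the chiral factor is invariant under the orthogonal conjugations and is multiplicative through the block-triangular structure: $\det(G^{(l)}G^{(l)T}/2)^{\nu_l/2}=2^{-N\nu_l/2}|\det D_l|^{\nu_l}$, which splits as $\prod_{s\le k}|\lambda_s^{(l)}|^{\nu_l}\prod_{s>k}|\det G_s^{(l)}|^{\nu_l}$. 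Integrating out the strictly upper-triangular parts $T_l$ and the quotient $O^*(N)/O^*(2)^{(N-k)/2}$ is then unchanged from Theorem~\ref{T3}.

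It remains to read off the two weights. In the real channel each scalar entry carries $\big((\lambda_s^{(l)})^2/2\big)^{\nu_l/2}e^{-(\lambda_s^{(l)})^2/2}$, and taking the Mellin convolution over $l=1,\dots,m$ through the delta constraint $\lambda_s=\lambda_s^{(1)}\cdots\lambda_s^{(m)}$ produces exactly the Meijer $G$-function $w_r^\nu$ by the inverse-Mellin definition \eqref{GG}; in particular the single-factor exponent $\nu_l/2$ supplies the lower parameters $\nu_1/2,\dots,\nu_{m-1}/2,0$. In the complex channel I would repeat the change of variables of \eqref{edel-decomp}, bring each $2\times2$ block to Edelman form, integrate out the similarity angle and replace the block by $\diag(\mu_+,\mu_-)$; the surviving chiral factor is precisely what upgrades $W$ to $W^\nu$ and hence $w_c$ to $w_c^\nu$.

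The main obstacle, as in the square case, is bookkeeping rather than ideas: one must track every proportionality constant through the reduction and the Schur Jacobian so as to recover $Z_N^{m,\nu}$ on the nose, and verify the Mellin/Meijer-$G$ identity that collapses the product of single-factor weights into the closed form for $w_r^\nu$. Establishing that the induced chiral weight is exactly $\det(\cdot)^{\nu_l/2}$, with the correct distribution of the determinant between the real and complex channels, is the delicate point; once it is in hand the remainder of the derivation of Theorem~\ref{T3} transfers without change.
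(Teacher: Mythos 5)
Your proposal is correct and follows essentially the same route as the paper: reduce the rectangular product to a square one with induced chiral weights via the procedure of \cite{IK14} (cf.\ \cite[Prop.~2.4]{Ip15thesis}), then rerun the generalised real Schur decomposition argument of Theorem~\ref{T3} and the Edelman-form change of variables for the $2\times2$ blocks. The paper's own proof is terse, delegating the Schur-decomposition step to \cite[Prop.~4.26]{Ip15thesis}, whereas you usefully make explicit the key bookkeeping it leaves implicit, namely that $\det(G^{(l)}G^{(l)T}/2)^{\nu_l/2}$ is invariant under the orthogonal factors and factorises over the diagonal blocks, distributing the chiral weight correctly between the real and complex channels.
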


\begin{proof}
The proof follows the same lines as the proof of Theorem~\ref{T3}. We use generalised real Schur decomposition to get an expression for the joint PDF in terms of real eigenvalues and $2\times 2$ matrices, see~\cite[Prop.~4.26]{Ip15thesis}. Finally, changing variables in this expression from the general $2\times 2$ matrix $G$ to a matrix~\eqref{edel-decomp} using an orthogonal similarity transformation and introducing the singular values, $\mu_\pm$, completes the proof.
\end{proof}

\begin{prop}\label{prop:poly}
For the skew-product~\eqref{Km} defined in accordance with the joint PDF given by Proposition~\ref{prop:jpdf-rect}, the polynomials
\begin{equation}\label{skew-ortho-rect}
p_{2j}(x)=x^{2j},\qquad p_{2j+1}=x^{2j+1}-x^{2j-1}\prod_{k=1}^m(2j+\nu_k)
\end{equation}
form a skew-orthogonal set with normalisation
\begin{equation}\label{u-rect}
h_{j-1}^\nu=\prod_{k=1}^m\frac{2\sqrt{2\pi}}{2^{\nu_k}}\Gamma(2j+\nu_k-1).
\end{equation}
\end{prop}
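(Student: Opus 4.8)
The plan is to follow the proof of Proposition~\ref{prop:skew} closely, the only genuinely new input being the bookkeeping of the intermediate matrix dimensions $2n+\nu_k$. As in the square case, I would use the characteristic-polynomial averages~\eqref{F1} to identify the monic skew-orthogonal polynomials. Since the eigenvalue distribution associated with the weights $w_r^\nu,w_c^\nu$ of Proposition~\ref{prop:jpdf-rect} is precisely that of the rectangular product $X_1\cdots X_m$ reduced to size $2n\times2n$ --- where now $X_i$ has dimensions $(2n+\nu_{i-1})\times(2n+\nu_i)$ with $\nu_0=\nu_m=0$ --- the averages in~\eqref{F1} may be computed directly over the independent standard Gaussian entries of the $X_i$.

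For the even polynomials, writing $P:=X_1\cdots X_m=[y_{jk}]$ and expanding $\det(z\mathbb I_{2n}-P)$ over permutations, the crucial point is unchanged: in any monomial containing at least one factor $y_{l,\sigma(l)}$, the entries of the first factor $X_1$ occur in pairwise distinct rows $l\in\{1,\dots,2n\}$ and therefore cannot be matched into equal pairs, so the Gaussian average vanishes exactly as in~\eqref{F2}. Only the all-$z$ contribution of the identity permutation survives, giving $p_{2n}(z)=z^{2n}$. For the odd polynomials I would use $p_{2n+1}(z)=zp_{2n}(z)+\langle\det(z\mathbb I_{2n}-P)\tr P\rangle$ and again retain only the identity permutation, reducing the computation to $-z^{2n-1}\sum_{l=1}^{2n}\langle y_{l,l}^2\rangle$. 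Here the single dimensional change enters: the diagonal entry $y_{l,l}=\sum_{i_1,\dots,i_{m-1}}(X_1)_{l,i_1}(X_2)_{i_1,i_2}\cdots(X_m)_{i_{m-1},l}$ is a sum over intermediate indices $i_k\in\{1,\dots,2n+\nu_k\}$, so after squaring only the $\prod_{k=1}^{m-1}(2n+\nu_k)$ perfect squares survive, each contributing unity. Summing over $l$ and using $\nu_m=0$ gives $\sum_{l=1}^{2n}\langle y_{l,l}^2\rangle=2n\prod_{k=1}^{m-1}(2n+\nu_k)=\prod_{k=1}^m(2n+\nu_k)$, which yields~\eqref{skew-ortho-rect}.

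The normalisation~\eqref{u-rect} I would obtain, not by direct integration against the intractable weight $w_c^\nu$, but from the total-probability constraint as in Proposition~\ref{prop:skew}. Since the skew-orthogonal polynomials just found are independent of the matrix size, evaluating $Z_M[u,v]|_{u=v=1}=1$ via~\eqref{3.33} at even sizes $M=2n$ and $M=2n-2$ gives $\prod_{l=1}^{n}h_{l-1}^\nu=Z_{2n}^{m,\nu}$ and $\prod_{l=1}^{n-1}h_{l-1}^\nu=Z_{2n-2}^{m,\nu}$ with $Z_N^{m,\nu}$ from~\eqref{normalisation-rect}; dividing isolates $h_{n-1}^\nu=Z_{2n}^{m,\nu}/Z_{2n-2}^{m,\nu}$. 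This ratio equals $2^{m(2n-1/2)}\prod_{k=1}^m\Gamma(\tfrac{2n-1+\nu_k}{2})\Gamma(\tfrac{2n+\nu_k}{2})$, and applying the Legendre duplication formula to each pair in the form $\Gamma(\tfrac{2n-1+\nu_k}{2})\Gamma(\tfrac{2n+\nu_k}{2})=\sqrt\pi\,2^{2-2n-\nu_k}\Gamma(2n+\nu_k-1)$ collapses it to $\prod_{k=1}^m 2^{3/2}\sqrt\pi\,2^{-\nu_k}\Gamma(2n+\nu_k-1)$, which is precisely~\eqref{u-rect}.

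The main obstacle is conceptual rather than computational: one must be sure that the matrix-average representation~\eqref{F1} genuinely transfers to the rectangular setting, i.e.\ that the reduction to the effective square product of size $2n$ carrying the weights of Proposition~\ref{prop:jpdf-rect} is the correct ensemble over which to average, and that the constrained first and last factors $X_1,X_m$ (of dimensions $2n\times(2n+\nu_1)$ and $(2n+\nu_{m-1})\times2n$) do not disturb the distinct-row vanishing argument. Once this identification is secured, the polynomial computation is a direct analogue of the square case, and the normalisation follows from the duplication-formula simplification above.
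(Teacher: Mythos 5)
Your proposal is correct and follows essentially the same route as the paper: the paper's own proof simply notes that the vanishing-average property \eqref{F2} for entries in distinct rows and columns persists for rectangular factors, so the computation of Proposition~\ref{prop:skew} goes through verbatim (your dimensional bookkeeping $\sum_{l=1}^{2n}\langle y_{l,l}^2\rangle = 2n\prod_{k=1}^{m-1}(2n+\nu_k)=\prod_{k=1}^m(2n+\nu_k)$ is exactly the needed modification), and the normalisation \eqref{u-rect} is obtained, as you do, by evaluating the generalised partition function \eqref{L3c1} at $u=v=1$ against \eqref{normalisation-rect}. Your explicit ratio argument $h_{n-1}^\nu = Z_{2n}^{m,\nu}/Z_{2n-2}^{m,\nu}$ together with the duplication formula just spells out what the paper leaves implicit, and is valid since the polynomials and norms are independent of the matrix size.
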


\begin{proof}
For a product square matrices, we found the skew-orthogonal polynomials by exploiting that elements taken of different rows and columns are uncorrelated. This property is still true for rectangular matrices, thus skew-orthogonal polynomials~\eqref{skew-ortho-rect} are obtained following the exact same steps. Likewise for the normalisation~\eqref{u-rect} where we evaluate the generalised partition function~\eqref{L3c1} at $u=v=1$ and use~\eqref{normalisation-rect}.
\end{proof}

With these two propositions established, it is straightforward to extend the rest of our results from square to rectangular matrices. In particularly, we have that the probability of finding exactly $2k$ eigenvalues are real is given by
\begin{equation}
p_{N,2k}^{P_m} = \prod_{l=1}^m\prod_{j=1}^N \frac{1 }{ \Gamma ((j+\nu_l)/2) } 
[\zeta^k]\det \Big [ b_{j,l}^\nu(\zeta)  \Big ]_{j,l=1,\dots,N/2},
\end{equation}
for $N$ even, while the probability of finding $2k+1$ real eigenvalues is
\begin{equation}
p_{N,2k+1}^{P_m} = \prod_{l=1}^m\prod_{j=1}^N \frac{1 }{ \Gamma ((j+\nu_l)/2) }   
[\zeta^k]\det \Big [ [b_{j,l}^\nu(\zeta) ]^{j=1,\dots,(N+1)/2}_{k=1,\dots,(N-1)/2}  
\quad [\tilde a_j^\nu ]_{j=1,\dots,(N+1)/2} \Big],
\end{equation}
for $N$ odd. Here, we have defined
\begin{equation}
b_{j,l}^\nu(\zeta):=(\zeta - 1) \Big ( a_{j,l}^\nu - 2^{-2} \prod_{i=1}^m(2 (l+\nu_i-1)) a_{j,l-1}^\nu \Big ) +
2^{-(2j-1/2)m} h_{j-1}^\nu\delta_{j,l} 
\end{equation}
with $h_{j}^\nu$ given by~\eqref{u-rect}, $a_{j,l}^\nu$ and $\tilde a_j^\nu$ ($j,l>0$) given by~\eqref{aa-rect} and $a_{j,0}^\nu=0$.
The similarity with Theorem~\ref{T1} is immediate.

Moreover, the local densities at the origin is given by
\begin{equation}\label{c-local-rect}
 \frac{2y\,w_c^\nu(x,y)}{(2 \sqrt{2 \pi})^m}\prod_{l=1}^m\frac{\nu_l!}{2^{\nu_l}}\MeijerG{1}{0}{0}{m}{-}{\nu_m,\ldots,\nu_1}{-\abs{z}^2}
\end{equation}
for the complex eigenvalues and 
\begin{equation}\label{r-local-rect}
\int_{-\infty}^\infty dv\,\abs{x-v} w_r^\nu(x)w_r^\nu(v)
\prod_{l=1}^m\frac{\nu_l!}{2^{\nu_l}}\MeijerG{1}{0}{0}{m}{-}{\nu_m,\ldots,\nu_1}{-xv}
\end{equation}
for the real eigenvalues. This generalises~\eqref{pse} and~\eqref{r-local}, respectively. The generalised formulae~\eqref{c-local-rect} and~\eqref{r-local-rect} follows from the derivations in Section~\ref{sec:c-eigenvalues} and~\ref{sec:r-eigenvalues} now using the weights and polynomials from Proposition~\ref{prop:jpdf-rect} and~\ref{prop:poly}. The global densities remains unaltered as long as $\{\nu\}$ are kept fixed in the large-$N$ limit.
 
 \section*{Acknowledgements}
 We would like to thank Mario Kieburg and Oleg Zaboronski comments on this manuscript. Remark~\ref{mario-remark} on page~\pageref{mario-remark} was given to us by Mario Kieburg. 
 The work of PJF was supported by the Australian Research Council grant DP140102613,
 and that of JRI by the ARC Centre of Excellence for Mathematical
 and Statistical Frontiers.
  

\providecommand{\bysame}{\leavevmode\hbox to3em{\hrulefill}\thinspace}
\providecommand{\MR}{\relax\ifhmode\unskip\space\fi MR }
\providecommand{\MRhref}[2]{%
  \href{http://www.ams.org/mathscinet-getitem?mr=#1}{#2}
}
\providecommand{\href}[2]{#2}

\end{document}